\newtheorem{thm}{Theorem}[section]
\newtheorem{df}[thm]{Definition}
\newtheorem{lem}[thm]{Lemma}
\newtheorem{rem}[thm]{Remark}
\newenvironment{pf1}{\noindent {\it Proof }}{\hspace*{\fill} $\Box$ \\}
\begin{document}

\title[Chern--Simons--Schr\"{o}dinger theory on a lattice]{Chern--Simons--Schr\"{o}dinger theory \\on a one-dimensional lattice}

\author[H. Huh]{Hyungjin Huh}
\address[H. Huh]{Department of Mathematics, Chung-Ang University, Seoul, Republic of Korea, 06974}
\email{huh@cau.ac.kr}

\author[S. Hussain]{Swaleh Hussain}
\address[S.Hussain]{Department of Mathematics and Statistics, McMaster University, Hamilton, Ontario, Canada, L8S 4K1}
\email{hussainswaleh@gmail.com}

\author[D.E. Pelinovsky]{Dmitry E. Pelinovsky$^{*}$}\footnote{Corresponding author.}
\address[D.E. Pelinovsky]{Department of Mathematics and Statistics, McMaster University, Hamilton, Ontario, Canada, L8S 4K1}
\email{dmpeli@math.mcmaster.ca}

\begin{abstract}
We propose a gauge-invariant system of the Chern--Simons--Schr\"{o}dinger type on a one-dimensional lattice.
By using the spatial gauge condition, we prove local and global well-posedness of the initial-value problem
in the space of square summable sequences for the scalar field. We also study the existence region of
the stationary bound states, which depends on the lattice spacing and the nonlinearity power. A major difficulty in
the existence problem is related to the lack of variational formulation of the stationary equations.
Our approach is based on the implicit function theorem in the anti-continuum limit and the solvability constraint
in the continuum limit.
\end{abstract}

\keywords{Chern--Simons--Schr\"{o}dinger equations, Initial-value problem, Discrete solitons, Continuum limit, Anticontinuum limit}


\maketitle
\date{}

\section{Introduction}
\setcounter{equation}{0}

Gauge theories are important in quantum electrodynamics, quantum chromodynamics, and particle physics.
In quantum chromodynamics, perturbative calculations break down frequently in the high energy regime
resulting in the so-called {\em ultraviolet divergence}, or the divergence at small lengths.
Non-perturbative calculations formally involve evaluating an infinite-dimensional path integral
which is computationally intractable. To overcome the divergence problem,
Wilson  developed lattice gauge theory by working on lattice with the smallest length determined by the lattice spacing $h$
\cite{W}. The path integral becomes finite-dimensional on lattice and thus can be easily evaluated. When $h$ goes to zero,
the lattice gauge theory converges to the continuum gauge theory at the formal level. See \cite{Gatt,Ko,Smit} for review.
The lattice gauge theory attracted a lot of attention of physicists and mathematicians
(see \cite{BG, Cha, GR} for recent studies).

Dynamics of matter and gauge fields can be described by several types of models which
include nonlinear wave, Schr\"{o}dinger, Dirac, and Ginzburg--Landau equations with either Maxwell or Chern--Simon gauge.
Our work corresponds to the case of the nonlinear Schr\"{o}dinger equation with the Chern--Simon gauge, which we label as
the CSS system.

In the continuous setting, the initial-value problem of the CSS system was studied in \cite{BBS, LST}
and the stationary bound states of the CSS system were constructed in \cite{BHS, PR}. The main objective of this work is
to propose a gauge-invariant discretization of the CSS system on a one-dimensional grid with the lattice spacing $h>0$
and to study both the initial-value problem and the existence of stationary bound states in the discrete CSS system.

The concepts of gauge invariance and preservation of the gauge constraints are crucial elements in
the study of gauged nonlinear evolution equations. For instance,  the initial-value problem of the
nonlinear Schr\"{o}dinger equation with the Maxwell gauge was studied in \cite{GNS} by considering equations
with a dissipation term, which was added to preserve the constraint equation. As the dissipation term vanishes,
conservation of energy and charge was used to obtain compactness.

The numerical studies of the gauged evolution equations are mostly confined to the conventional finite difference
and finite element methods \cite{ LZ, MC, RS, WS}. In the last few decades, structure-preserving discretization  \cite{Ch3} has emerged
as an important tool of the numerical computations. The gauge invariant difference
approximation of the Maxwell gauged equations was studied in \cite{Ch1, Ch2, Du2, Du3}.
In particular, it was shown in \cite{Ch2} that the discretized solution of the finite element method
with a gauge constraint converges to a weak solution of the Maxwell-Klein-Gordon equation in two space dimensions
for initial data of finite energy. The essential features of the discretization were the energy conservation
and the constraint preservation, which give control over the curl and divergence of the vector potential.

The discrete CSS system, which we consider here, is also based on the finite-difference method
and the discretization is proposed in such a way that the CSS system remains gauge invariant
with the gauge constraint being preserved in the time evolution. This allows us to simplify the system of equations
to the discrete NLS (nonlinear Schr\"{o}dinger) equation for the scalar field
coupled with the constraints on components of the gauge vector. Local well-posedness of the
initial-value problem of the discrete CSS system follows from this constrained
discrete NLS equation by the standard fixed-point arguments.

We show that the time-evolution of the discrete CSS system preserves the mass defined as the squared $\ell^2$ norm
of the scalar field. However, the total energy is not preserved in the time evolution. Nevertheless, the mass
conservation is sufficient in order to extend the local solutions for all times and
to conclude on the global well-posedness of the initial-value problem of the discrete CSS system.

The lack of energy conservation presents difficulties in constructing the stationary bound states of
the discrete CSS system with a variational approach. As a result, we construct the stationary bound states
by using the implicit function theorem in the anti-continuum limit as $h \to \infty$ at least for sub-quintic
powers of the nonlinearity. For quintic and super-quintic nonlinearities, the stationary solutions do not usually extend
to the anti-continuum limit and terminate at the fold points. We also show that the stationary solutions
do not extend to the continuum limit as $h \to 0$ for any nonlinearity powers and terminate at the fold points.
These analytical results are complemented with the numerical approximations of the stationary bound states
continued with respect to the lattice spacing parameter $h$.

The anti-continuum limit $h \to \infty$ is a popular case of study, for which
the existence of stationary bound states can be proven with analytical tools \cite{MA94}. This limit
corresponds to the weakly coupled lattice and is opposite to the continuum limit $h \to 0$,
for which the lattice formally converges to the continuous system.

There are technical obstacles to explore the analogous questions on a two-dimensional lattice.
The gauge constraints do not allow us to simplify the discrete CSS system to the form of the constrained
discrete NLS equation.

The article is organized as follows. Section \ref{sec-2} presents the main results.
Well-posedness of the initial-value problem is considered in Section \ref{sec-3}.
Analytical results on the existence of stationary bound states are proven in Section \ref{sec-4}.
Numerical approximations of the stationary bound states are collected in Section \ref{sec-5}.
Section \ref{sec-6} concludes the article with a summary.

\section{Main results}
\setcounter{equation}{0}
\label{sec-2}

The continuous CSS system in two space dimensions can be written in the following form:
\begin{align}\label{css-two-dimensions}
\left\{
 \begin{aligned}
& i D_0 \phi + D_1D_1 \phi + D_2 D_2\phi+ \lambda |\phi|^{2p}\,\phi=0, \\
&  \partial_t A_1 - \partial_x A_0 = \mbox{Im} (\bar{\phi} D_2 \phi),\\
& \partial_t A_2 - \partial_y A_0 = -\mbox{Im} (\bar{\phi} D_1 \phi),\\
& \partial_x A_2 - \partial_y A_1 = \frac12 |\phi|^2,
 \end{aligned} \right.
\end{align}
where $t \in \Bbb{R}^1$, $(x,\,y) \in \Bbb{R}^2$, $\phi \in \Bbb{C}$ is the scalar field,
$(A_{0}, \, A_1, \, A_2) \in \Bbb{R}^3$  is the gauge vector,
$D_{0} = \partial_t - i A_0$, $D_1 = \partial_x - i A_1$, and $D_2 = \partial_y - i A_2$  are the covariant derivatives,
$\lambda>0$ is a coupling constant representing the strength of interaction potential,
and $p > 0$ is the nonlinearity power. The CSS system (\ref{css-two-dimensions})
admits a Hamiltonian formulation with conserved mass and total energy \cite{Ham1,Ham2}.

When the scalar field and the gauge vector are independent of $y$, the continuous CSS system \eqref{css-two-dimensions}
can be rewritten in one space dimension as follows:
\begin{align}\label{css}
\left\{
 \begin{aligned}
& i D_0 \phi + D_1D_1 \phi-A_2^2 \phi+ \lambda |\phi|^{2p}\,\phi=0, \\
&  \partial_t A_1 - \partial_x A_0 = -A_2| \phi|^2,\\
& \partial_t A_2 = -\mbox{Im} (\bar{\phi} D_1 \phi),\\
& \partial_x A_2 =\frac12 |\phi|^2,
 \end{aligned} \right.
\end{align}
where $\phi(t,x): \Bbb{R}^1 \times \Bbb{R}^1 \to \Bbb{C}$ and $(A_{0}, \, A_1, \, A_2)(t,x) : \Bbb{R}^1 \times \Bbb{R}^1 \to \Bbb{R}^3$.
The continuous one-dimensional CSS system \eqref{css} admits conservation of mass
\begin{equation}
\label{mass}
M = \int_{\mathbb{R}} |\phi(t,x)|^2 dx
\end{equation}
and conservation of the total energy
\begin{equation}
\label{energy}
E = \int_{\mathbb{R}}\Big(  |D_1 \phi|^2 + A_2^2 |\phi|^2 - \frac{\lambda}{p+1} |\phi|^{2p+2} \Big) (t,x) \,dx.
\end{equation}

We propose to consider the following discrete CSS system:
\begin{align}\label{dcss}
\left\{
 \begin{aligned}
& i D_0 \phi(t, n) +D_1^- D_1^+ \phi(t, n) - A_2^2(t, n) \phi(t, n)+ \lambda |\phi(t, n)|^{2p}\,
\phi(t, n)=0, \\
&  \partial_t A_1(t, n+\frac12) -\nabla_1^{+} A_0 (t, n)= - A_2(t, n) |\phi(t, n)|^2, \\
& \partial_t A_2 (t, n)= -\mbox{Im} (\bar{\phi}(t,n-1) D_1^{+} \phi (t,n-1)), \\
& \nabla_1^{+} A_2 (t, n) =\frac12 |\phi (t, n)|^2,
\end{aligned} \right.
\end{align}
where $n\in \Bbb{Z}$, $\phi(t,\,n)$, $A_0(t, \,n)$, $A_2(t, \,n)$ are defined on lattice sites $n$, and
$A_1(t, \,n+ \frac{1}{2})$ is defined at middle distance between lattice sites $n$ and $n+1$.
Similarly to the continuous CSS system \eqref{css}, $\phi(t, n )$ is the scalar field, whereas
$A_{0}(t, n ), \, A_1(t,n+\frac{1}{2}), \, A_2(t, n)$  are components of the gauge vector.
The discrete covariant derivatives are defined as
\begin{align}
\label{covariant}
\left\{
 \begin{aligned}
D_1^{+}\phi(t,\,n)&=\frac{1}{h}\left[ e^{-ih A_1(t,\,n+\frac12)}  \phi(t,\,n+1)-\phi(t,\,n) \right],\\
D_1^{-}\phi(t,\, n)&=\frac{1}{h}\left[ \phi(t, n)-e^{ih A_1(t,\, n-\frac12)} \phi(t,\, n-1) \right],
\end{aligned}  \right.
\end{align}
whereas the finite difference operators are defined by
\begin{align}
\label{difference}
\left\{
 \begin{aligned}
\nabla_1^{+}f (t,\, n)=\frac{ 1}{h} \Big[     f(t,\, n+1) -  f(t,\,n)    \Big],\\
\nabla_1^{-}f (t,\, n)=\frac{ 1}{h} \Big[     f(t,\,n) -  f(t,\,n-1)    \Big]. \end{aligned}  \right.
\end{align}
In the continuum limit $h\to 0$, if $f(t,n)$, $n \in \Bbb{Z}$ converges to a smooth function $\textsf{f}(t,x)$, $x \in \mathbb{R}$
such that $f(t,n) = \textsf{f}(t,hn)$ for every $n \in \Bbb{Z}$, then the discrete covariant derivatives \eqref{covariant} and
the finite differences \eqref{difference} converge formally at any fixed $x \in \mathbb{R}$:
\begin{align*}
\left\{
 \begin{aligned}
& D_1^{\pm} f(t,n) \to D_1 \textsf{f}(t,x),  \\
& \nabla_1^{\pm} f(t,n) \to \partial_x \textsf{f}(t,x), \end{aligned}  \right.
\end{align*}
where the lattice is centered at fixed $x$. The continuous CSS system \eqref{css} follows
formally from the discrete CSS system \eqref{dcss} as $h \to 0$.

It is natural to look for solutions to the discrete CSS system \eqref{dcss} in the
space of squared summable sequences for the sequence $\{ \phi(n) \}_{n \in \mathbb{Z}}$ denoted simply by $\phi$:
\[
\ell^2(\Bbb{Z}) := \Big\{ \phi \in \Bbb{C}^{\Bbb{Z}} : \quad   \| \phi \|_{\ell^2}^2  := \sum_{n\in \Bbb{Z}} |\phi(n)|^2 < \infty \Big\},
\]
equipped with the inner product
\[
\langle \phi, \psi   \rangle = \sum_{n \in \mathbb{Z}} \phi (n) \overline{\psi} (n).
\]
The $\ell^2$ space is embedded into $\ell^p$ spaces for every $p > 2$ in the sense
of $\| \phi \|_{\ell^p} \leq \| \phi \|_{\ell^2}$. The embedding includes
the limiting case $p = \infty$ for which $\|\phi \|_{\ell^{\infty}} =\sup_{n \in \mathbb{Z}} |\phi(n)|$.

\begin{df}
\label{def-sol}
We say that $(\phi,A_0,A_1,A_2)$ is a local solution to the discrete CSS system \eqref{dcss}
if there exists $T > 0$ such that
\begin{align*}
\phi\in C^1([-T, \,T],\, \ell^2(\mathbb{Z})) \quad \mbox{and} \quad  A_0, \, A_1,\, A_2 \in C^1([-T, \,T],\, \ell^{\infty}(\mathbb{Z}))
\end{align*}
satisfy \eqref{dcss}. If $T > 0$ can be extended to be arbitrary large, then we say that
$(\phi,A_0,A_1,A_2)$ is a global solution to the discrete CSS system \eqref{dcss}.
\end{df}

A local solution to the discrete CSS system \eqref{dcss} in the sense of Definition \ref{def-sol} enjoys conservation of the mass
\begin{equation}
\label{mass-discrete}
M = h \sum_{n \in \mathbb{Z}} |\phi(t,n)|^2
\end{equation}
which generalizes the mass \eqref{mass} of the continuous CSS system \eqref{css}. On the other hand, no conservation
of energy exists in the discrete CSS system \eqref{dcss}, which would generalize the energy \eqref{energy} of the continuous CSS system \eqref{css}.
See Remarks \ref{rem-redundancy} and \ref{rem-mass}.

Because the last equation of the discrete CSS system \eqref{dcss} is redundant in the initial-value problem (Lemma \ref{lem-redundancy}),
the local well-posedness of the initial-value problem can not be established without a gauge condition. However,
the discrete CSS system \eqref{dcss} enjoys the gauge invariance (Lemma \ref{lem-gauge}) and this invariance
can be used to reformulate the discrete CSS system \eqref{dcss} with the gauge condition $A_1(t,n+\frac{1}{2}) = 0$.
The simpler form \eqref{dcssr} of the discrete CSS system consists of the NLS equation for the scalar field $\phi$
constrained by two equations on $A_0$ and $A_2$. The following theorem represents the main result on global well-posedness
of the initial-value problem for the discrete CSS system \eqref{dcss} with the gauge condition $A_1(t,n+\frac{1}{2}) = 0$.

\begin{thm}
\label{lem-WPS}
For every $\Phi \in \ell^2(\mathbb{Z})$ and every $(\alpha,\beta)$,
there exists a unique global solution $(\phi,\,A_0,\,A_1 = 0,\,A_2) $ to the initial-value problem
for the discrete CSS system \eqref{dcss} satisfying
\begin{equation}
\label{solutions-WPS}
\phi\in C^1 (\mathbb{R}, \, \ell^2(\mathbb{Z})) \quad \mbox{ and } \quad A_0, \,A_2 \in C^1 (\mathbb{R}, \, \ell^{\infty}(\mathbb{Z})),
\end{equation}
the initial conditions $\phi(0, n)=\Phi_n$, $A_0(0,n) = \mathcal{A}_n$ and $A_2(0,n) = \mathcal{B}_n$,
the boundary conditions
\begin{equation}
\label{boundary-values}
\lim_{n\to \infty} A_0(t, n)=\alpha \quad \mbox{\rm and} \quad
\lim_{n\to \infty} A_2(t, n)=\beta,
\end{equation}
and the consistency conditions
\begin{equation}
\label{consistency}
(\nabla_1^+ \mathcal{A})_n = \mathcal{B}_n |\Phi_n|^2, \quad (\nabla_1^+ \mathcal{B})_n = \frac{1}{2} |\Phi_n|^2.
\end{equation}
Moreover, the solution \eqref{solutions-WPS} depends continuously on $\Phi \in \ell^2(\mathbb{Z})$.
\end{thm}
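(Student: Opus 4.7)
The plan is to exploit the gauge choice $A_1\equiv 0$ to reduce the full system \eqref{dcss} to a single nonlocal discrete NLS equation for $\phi$. Under this gauge, the second and fourth equations of \eqref{dcss} become the algebraic constraints
\begin{align*}
\nabla_1^+ A_2(t,n) = \tfrac{1}{2}|\phi(t,n)|^2, \qquad \nabla_1^+ A_0(t,n) = A_2(t,n)\,|\phi(t,n)|^2,
\end{align*}
coupled to the boundary conditions $A_2(t,n)\to\beta$ and $A_0(t,n)\to\alpha$ as $n\to+\infty$. Telescoping each constraint from $n$ out to $+\infty$ produces the explicit formulas
\begin{align*}
A_2[\phi](n) = \beta - \frac{h}{2}\sum_{k\geq n}|\phi(k)|^2, \qquad A_0[\phi](n) = \alpha - h\sum_{k\geq n} A_2[\phi](k)\,|\phi(k)|^2.
\end{align*}
Since $\phi\in\ell^2(\mathbb{Z})$ forces $|\phi|^2\in\ell^1(\mathbb{Z})$, both tail sums converge absolutely, and a direct estimate bounds $\|A_2[\phi]-\beta\|_{\ell^\infty}$ and $\|A_0[\phi]-\alpha\|_{\ell^\infty}$ by polynomial expressions in $\|\phi\|_{\ell^2}$.

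Next, I would substitute these formulas into the first equation of \eqref{dcss} and recast the evolution in the abstract form $\partial_t\phi = -iF(\phi)$ on $\ell^2(\mathbb{Z})$, where
\begin{equation*}
F(\phi) := -\frac{1}{h^2}\Delta_h\phi - A_0[\phi]\phi + A_2[\phi]^2\phi - \lambda|\phi|^{2p}\phi,
\end{equation*}
and $\Delta_h$ is the bounded discrete Laplacian. Using the embedding $\ell^2\hookrightarrow\ell^\infty$ to absorb the pointwise powers of $\phi$, a routine estimate shows that the maps $\phi\mapsto A_2[\phi]$ and $\phi\mapsto A_0[\phi]$ are Lipschitz from any bounded subset of $\ell^2$ into $\ell^\infty$, and hence $F:\ell^2(\mathbb{Z})\to\ell^2(\mathbb{Z})$ is locally Lipschitz. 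The Picard--Lindel\"of theorem on the Banach space $\ell^2(\mathbb{Z})$ then furnishes, for each $\Phi\in\ell^2(\mathbb{Z})$, a unique $C^1$ local solution $\phi$ and the continuous dependence on $\Phi$. The consistency conditions \eqref{consistency} identify $\mathcal{A}$ with $A_0[\Phi]$ and $\mathcal{B}$ with $A_2[\Phi]$, so defining $A_0(t,\cdot):=A_0[\phi(t,\cdot)]$ and $A_2(t,\cdot):=A_2[\phi(t,\cdot)]$ yields admissible gauge fields, while the redundancy of Lemma~\ref{lem-redundancy} guarantees that the third equation of \eqref{dcss} is automatically preserved along the flow.

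Global existence then follows from an a priori $\ell^2$ bound. Pairing the NLS with $\phi$ in the $\ell^2$ inner product and taking imaginary parts, self-adjointness of $\Delta_h$ together with the fact that $-A_0|\phi|^2$, $A_2^2|\phi|^2$, and $-\lambda|\phi|^{2p+2}$ are all real-valued yields
\begin{equation*}
\frac{1}{2}\frac{d}{dt}\|\phi(t)\|_{\ell^2}^2 = \operatorname{Im}\langle F(\phi),\phi\rangle = 0,
\end{equation*}
which is exactly the mass conservation \eqref{mass-discrete}. Since the local existence time produced by Picard--Lindel\"of depends only on $\|\phi\|_{\ell^2}$, the identity $\|\phi(t)\|_{\ell^2}=\|\Phi\|_{\ell^2}$ precludes finite-time blow-up and extends the solution to all $t\in\mathbb{R}$.

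The main technical point I expect to treat with some care is verifying the Lipschitz property of the nested functional $A_0[\phi]$, whose definition couples two pointwise powers of $\phi$ with the already nonlocal $A_2[\phi]$; the estimate requires combining tail bounds for the two telescoping sums with the $\ell^2\hookrightarrow\ell^\infty$ embedding in the right order. Once this bookkeeping is in place, the remaining steps — local existence, mass conservation, and the continuation argument — are standard Banach-space ODE theory.
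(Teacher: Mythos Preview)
Your proposal is correct and follows essentially the same route as the paper: reduce \eqref{dcss} under the gauge $A_1\equiv 0$ to the constrained NLS system \eqref{dcssr}, invert the constraints via tail sums to express $A_0$ and $A_2$ explicitly in terms of $\phi$ with the bounds \eqref{imp-bounds}--\eqref{imp-bounds-new}, apply a fixed-point/Picard argument in $\ell^2$ for local well-posedness, and globalize via mass conservation \eqref{mass-discrete}. The only cosmetic difference is that the paper attributes the redundancy of the $\partial_t A_2$ equation to Remark~\ref{rem-redundancy} (the balance law) rather than Lemma~\ref{lem-redundancy} directly, but the content is the same.
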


The continuous CSS system \eqref{css} with the gauge condition $A_1(t,x) = 0$ can be reduced to
the continuous NLS equation \eqref{conti} for $\varphi(t,x)$ \cite{Huh}.
The continuous NLS equation admits a family of stationary bound states
$\varphi(t,x) = Q(x) e^{i \omega t}$ for every $\lambda > 0$, $p > 0$, and $\omega > 0$,
where $Q(x)$ can be written in the explicit form:
\begin{equation}
\label{soliton}
Q(x) = \sqrt{\omega (p+1) \lambda^{-1}} {\rm sech}^{\frac{1}{p}}(\sqrt{\omega} p x).
\end{equation}
It is natural to ask if the discrete CSS system \eqref{dcss} also admits stationary bound states for $\lambda > 0$ and $p > 0$.
The existence problem for stationary bound states reduces to the system of difference equations \eqref{equ23}.
It is rather surprising that the existence of stationary bound states of the discrete CSS system \eqref{dcss}
depends on the values of parameters $p$ and $h$.

The following theorem represents the main result on
the existence of stationary bound states of the discrete CSS system \eqref{dcss} in the anti-continuum limit $h \to \infty$.
The stationary bound states decay fast as $|n| \to \infty$, therefore, their existence
can be proven in the space of summable sequences denoted by $\ell^1(\Bbb{Z})$ with the norm
$\| \phi \|_{\ell^1} = \sum_{n \in \mathbb{Z}} |\phi(n)|$.

\begin{thm}\label{thm-existence}
For every $\lambda > 0$, $p \in (0,2)$, $\Omega > 0$,
and sufficiently large $h$, there exists a unique family of stationary bound states
in the form
\begin{equation}
\label{bound-states}
\phi(t,n) = e^{i \Omega t} U_n, \quad g(t,n) = G_n, \quad t \in \Bbb{R}, \quad n \in \mathbb{Z}
\end{equation}
with $U \in \ell^1(\Bbb{Z})$ and $G \in \ell^{\infty}(\Bbb{Z})$ such that
\[
U \sim \mathcal{U}(h) \delta_{0}, \quad G \sim \frac{h^2}{4} \mathcal{U}(h)^4 \chi_{0} \quad \mbox{\rm as} \quad h \to \infty,
\]
where $\mathcal{U}(h)$ is a positive root of the nonlinear equation
\begin{equation}
\label{root-scalar}
\lambda \mathcal{U}^{2p} + \frac{h^2}{4} \mathcal{U}^4 - \Omega = 0.
\end{equation}
Here the sign $\sim$ means the asymptotic expansion with the next-order term being smaller as $h \to \infty$ compared to the
leading-order term in the $\ell^{\infty}$ norm. The discrete $\delta_0$ and $\chi_0$ are defined by their components$:$
\begin{equation}
\label{delta}
[\delta_{0}]_n = \left\{ \begin{array}{l} 1, \; n = 0, \\ 0, \; n \neq 0, \end{array} \right. \quad
[\chi_{0} ]_n = \left\{ \begin{array}{l} 1, \; n \leq 0, \\ 0, \; n > 0. \end{array} \right.
\end{equation}
\end{thm}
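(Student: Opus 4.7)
The plan is to formulate the stationary problem as a nonlinear equation on $\ell^1(\mathbb{Z})$ for the scalar field $U$ and apply the implicit function theorem with small parameter $\varepsilon := h^{-2}$, using the single-site configuration as the anti-continuum leading order. First I would substitute the ansatz $\phi(t,n) = e^{i\Omega t} U_n$ and $A_2(t,n) = G_n$ into the reduced system from Theorem~\ref{lem-WPS} (in the gauge $A_1 \equiv 0$). The constraints $\nabla_1^+ G = \tfrac{1}{2}|U|^2$ and $\nabla_1^+ A_0 = -G |U|^2$, together with the boundary conditions at $+\infty$, determine $G$ and $A_0$ as explicit summation functionals of $U$. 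Substituting these back into the first equation of \eqref{dcss} reduces the stationary problem to a single scalar equation $\mathcal{F}_h(U) = 0$ in which the only site-coupling term is the discrete Laplacian $h^{-2}(U_{n+1} - 2U_n + U_{n-1})$.

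Second, I would identify the leading-order solution. Formally setting $\varepsilon = 0$ and trying the single-site ansatz $U^{(0)} = \mathcal{U}\delta_0$, the constraints force $G^{(0)} = \tfrac{h^2}{4}\mathcal{U}^4 \chi_0$ and a matching $A_0$ functional concentrated on $\{n \le 0\}$. The equation at $n = 0$ then collapses precisely to \eqref{root-scalar}. Since $\mathcal{U} \mapsto \lambda \mathcal{U}^{2p} + \tfrac{h^2}{4}\mathcal{U}^4$ is strictly increasing from $0$ to $+\infty$ on $(0,\infty)$, a unique positive root $\mathcal{U}(h)$ exists for every $h > 0$. Moreover, when $p < 2$ the quartic term dominates the sub-quintic one for small $\mathcal{U}$, so the dominant balance $\tfrac{h^2}{4}\mathcal{U}^4 \sim \Omega$ yields $\mathcal{U}(h) \sim (4\Omega)^{1/4} h^{-1/2}$ as $h \to \infty$.

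Third, I would linearize $\mathcal{F}_h$ at $U^{(0)}$ and apply the implicit function theorem. A direct computation using \eqref{root-scalar} shows that the diagonal of the leading anti-continuum linearization $L_0$ on $\ell^1(\mathbb{Z})$ equals $-\lambda \mathcal{U}^{2p}$ for $n < 0$, equals $-\Omega$ for $n > 0$, and tends to $4\Omega$ at $n = 0$; the nonlocal contributions from $G$ and $A_0$ vanish at $U^{(0)}$ away from the bump, leaving the local discrete Laplacian $h^{-2}\Delta_h$ as the only site-coupling correction. Invertibility of $L_0 + h^{-2}\Delta_h$ with a uniform-in-$h$ bound on $\ell^1(\mathbb{Z})$ reduces to controlling a discrete Helmholtz operator $-\lambda\mathcal{U}^{2p}I + h^{-2}\Delta_h$ on $\mathbb{Z}_{<0}$, whose Green's function decays with rate $\kappa \sim h \sqrt{\lambda \mathcal{U}^{2p}} \sim h^{(2-p)/2}$; this diverges precisely when $p < 2$ and yields the required uniform bound. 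The implicit function theorem then produces a unique correction $u = U - U^{(0)} \in \ell^1(\mathbb{Z})$ with $\|u\|_{\ell^1} = O(h^{-2})$ for all large $h$, and $G = G^{(0)} + g$ with $\|g\|_{\ell^\infty} = O(h^{-2})$ is recovered from the constraint. The claimed asymptotic expansions follow.

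The main obstacle is the uniform-in-$h$ invertibility of the linearization in step three. The restriction $p \in (0,2)$ is essential here: on the null sites $n < 0$ the leading diagonal entry of $L_0$ carries a vanishing mass $\lambda\mathcal{U}^{2p} \to 0$, and only for $p < 2$ does the discrete Helmholtz operator on $\mathbb{Z}_{<0}$ have a Green's function whose $\ell^1$-norm is uniformly bounded in $h$. This is the analytic counterpart of the fold-point termination of the stationary branches mentioned in the introduction for $p \ge 2$. A secondary technicality is that the nonlocal dependence of $G$ and $A_0$ on $U$ could in principle produce long-range couplings in the linearization, but at $U^{(0)}$ the only nonvanishing sensitivity is with respect to $U_0$, so these enter merely as a bounded local perturbation of $L_0$.
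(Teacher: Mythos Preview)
Your overall strategy---reduce to a scalar root-finding problem on $\ell^1(\mathbb{Z})$, take the single-site configuration as leading order, and continue via an implicit function theorem in the small parameter $h^{-2}$---is exactly the paper's. Your computation of the diagonal of the linearization is correct as well. The gap is in the third step: the discrete Helmholtz operator $-\lambda\mathcal{U}(h)^{2p}I + h^{-2}\Delta$ on $\mathbb{Z}_{<0}$ does \emph{not} have a uniformly bounded inverse on $\ell^1$. A diverging decay rate is not enough; the Green's function at the source has amplitude of order $1/m^2$ with $m^2 = \lambda\mathcal{U}(h)^{2p}\sim h^{-p}$, so its $\ell^1$ norm is $\sim h^{p}\to\infty$. (In this regime $m^2/\varepsilon\sim h^{2-p}\to\infty$, the operator is diagonally dominated, and its inverse is essentially $-m^{-2}I$.) Hence the full linearization $L_0+h^{-2}\Delta$ is not uniformly invertible on $\ell^1(\mathbb{Z})$, and a single global implicit function theorem with a uniform bound on $(D\mathcal{F}_h)^{-1}$ is not available. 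Your claimed bound $\|u\|_{\ell^1}=O(h^{-2})$ on the correction is correspondingly too optimistic on $\mathbb{Z}_{<0}$.

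The paper repairs this not by squeezing a uniform bound but by a block-by-block cascade after the rescaling $V=h^{1/2}U$: first solve on $\mathbb{Z}^+$ (diagonal $-\Omega$, inverse uniformly bounded), then at $n=0$ (diagonal $\to 4\Omega$), and finally on $\mathbb{Z}^-$. On $\mathbb{Z}^-$ the inverse is indeed $O(h^{p})$, but the only source is the Laplacian coupling $h^{-2}V_0$ entering at $n=-1$, so the resulting correction is $O(h^{p-2})$. It is this balance, not the decay rate of the Green's function, that forces the restriction $p\in(0,2)$. A minor point: in the statement $G$ denotes $g=A_0-A_2^2$, not $A_2$; this is why $G^{(0)}=\tfrac{h^2}{4}\mathcal{U}^4\chi_0$ rather than a quantity of order $h\,\mathcal{U}^2$.
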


In order to study the anti-continuum limit $h \to \infty$, we use the implicit function theorem similar
to the study of weakly coupled lattices in the anti-continuum limit \cite{MA94}. In particular,
we reformulate the difference equations \eqref{equ23} as the root-finding problem (Lemma \ref{lem-1}),
study the asymptotic behavior of roots $\mathcal{U}(h)$ in the nonlinear equation \eqref{root-scalar} (Lemma \ref{lem-0}),
and find the unique continuation of the single-site solutions with respect to the small
parameter (Lemma \ref{lem-2}). Compared with the standard anti-continuum limit in \cite{MA94},
the root $\mathcal{U}(h)$ of the nonlinear equation \eqref{root-scalar} depends on $h$
and the Jacobian of the difference equations \eqref{equ23} becomes singular as $h \to \infty$,
therefore, we need to use a renormalization technique in order to prove Theorem \ref{thm-existence}.
Besides the single-site solutions in Theorem \ref{thm-existence}, one can use the same technique
and justify the double-site and generally multi-site solutions in the anti-continuum limit (Remark \ref{rem-two-site}).

Another interesting and surprising result is that the stationary bound states of the discrete CSS system
\eqref{dcss} do not converge to the stationary bound states \eqref{soliton} of the continuous CSS system
\eqref{css} in the continuum limit $h \to 0$. The following theorem gives the corresponding result
which is proven with the use of the solvability constraint on suitable solutions to the difference equations \eqref{equ23}.

\begin{thm}
Let $Q$ be defined by \eqref{soliton}.
For every $\lambda > 0$, $p > 0$, $\Omega > 0$, and sufficiently small $h$, there exist no stationary bound states
in the form \eqref{bound-states} such that $U \in \ell^1(\Bbb{Z})$ and $G \in \ell^{\infty}(\Bbb{Z})$ satisfy
$U_{-n} = U_n$, $n \in \Bbb{Z}$ and
\begin{equation}
\label{error-bound}
 \| U - Q(h \cdot) \|_{\ell^1} + \| G \|_{\ell^{\infty}} \leq C h
\end{equation}
for an $h$-independent positive constant $C$.
\label{thm-nonexistence}
\end{thm}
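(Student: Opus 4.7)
The plan is to derive a contradiction by extracting a solvability constraint from the fourth equation of the discrete CSS system \eqref{dcss}, which in stationary form reads
\begin{equation*}
G_{n+1} - G_n = \frac{h}{2} |U_n|^2, \qquad n \in \mathbb{Z}.
\end{equation*}
Suppose a stationary bound state satisfying the hypotheses of the theorem exists. Since $U \in \ell^1(\mathbb{Z}) \hookrightarrow \ell^2(\mathbb{Z})$, summing the identity above from $n = -N$ to $n = N$ and letting $N \to \infty$ shows that $G$ has well-defined limits $G_{\pm\infty}$ with
\begin{equation*}
G_{+\infty} - G_{-\infty} = \frac{h}{2}\, \| U \|_{\ell^2}^2.
\end{equation*}
This is the solvability constraint I intend to exploit.

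Next I would bound the two sides of this identity in opposite directions. For the right-hand side, the error bound \eqref{error-bound} together with $\|\cdot\|_{\ell^2} \leq \|\cdot\|_{\ell^1}$ gives
\begin{equation*}
\| U \|_{\ell^2} \geq \| Q(h \cdot) \|_{\ell^2} - \| U - Q(h \cdot) \|_{\ell^2} \geq \| Q(h \cdot) \|_{\ell^2} - C h.
\end{equation*}
Because the explicit profile $Q$ in \eqref{soliton} is smooth and exponentially decaying, a standard Riemann-sum comparison yields $h\, \| Q(h \cdot) \|_{\ell^2}^2 = \| Q \|_{L^2}^2 + O(h^2)$, and therefore
\begin{equation*}
\frac{h}{2}\, \| U \|_{\ell^2}^2 \geq \frac{1}{4}\, \| Q \|_{L^2}^2 > 0
\end{equation*}
for all $h$ sufficiently small. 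For the left-hand side, the assumption $\| G \|_{\ell^\infty} \leq C h$ gives $|G_{+\infty} - G_{-\infty}| \leq 2 C h$, and for $h$ small enough that $2 C h < \tfrac{1}{4} \| Q \|_{L^2}^2$ the two bounds are incompatible, yielding the desired contradiction.

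The proof is essentially driven by a single mismatch in orders of magnitude: a soliton of fixed $L^2$ mass resolved by the lattice occupies $O(h^{-1})$ sites, so that $\frac{h}{2}\| U \|_{\ell^2}^2$ is forced to be of unit order, while \eqref{error-bound} constrains $G$ to oscillate within a band of width $O(h)$. I do not foresee any serious analytic obstacle: the one non-routine point is the Riemann-sum estimate for $\| Q(h \cdot) \|_{\ell^2}^2$, which is entirely standard given the smoothness and exponential decay of $Q$. I note finally that the symmetry hypothesis $U_{-n} = U_n$ does not enter the argument explicitly, but it is consistent with the natural gauge choice $G_{+\infty} = -G_{-\infty}$, under which $\| G \|_{\ell^\infty}$ can be replaced directly by $|G_{+\infty}|$ in the contradiction step.
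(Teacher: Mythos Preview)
Your argument rests on a misidentification of the equation governing $G$. In the statement of the theorem, $G$ is the stationary value of $g := A_0 - A_2^2$ (see the definition of the bound states \eqref{bound-states} together with the reduction at the beginning of Section~\ref{sec-4}), not of $A_2$. The fourth equation of \eqref{dcss} indeed gives $A_2(n+1)-A_2(n)=\tfrac{h}{2}|\phi(n)|^2$, but the difference equation for $G$, obtained in \eqref{our} and \eqref{equ23}, is
\[
G_{n+1}-G_n \;=\; -\frac{h^2}{4}\,|U_n|^4,
\]
not $\tfrac{h}{2}|U_n|^2$. With the correct equation the total jump is
\[
G_{-\infty}-G_{+\infty} \;=\; \frac{h^2}{4}\,\|U\|_{\ell^4}^4 \;\approx\; \frac{h}{4}\int_{\mathbb{R}} Q(x)^4\,dx,
\]
which is of order $h$, exactly the order allowed by $\|G\|_{\ell^\infty}\leq Ch$. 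In fact the paper observes (at the end of the proof of Theorem~\ref{thm-nonexistence}) that the bound $\|G\|_{\ell^\infty}\leq Ch$ is a \emph{consequence} of $\|U-Q(h\cdot)\|_{\ell^1}\leq Ch$, not an independent constraint. So your solvability identity is consistent with the hypotheses and yields no contradiction.

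The paper's route is genuinely different: multiply the scalar equation in \eqref{equ23} by $U_{n+1}-U_{n-1}$ and sum over $n$ to obtain the Pohozaev-type constraint
\[
\frac{h^2}{4}\sum_{n\in\mathbb{Z}} U_n^5 U_{n+1} \;+\; \lambda \sum_{n\in\mathbb{Z}} U_{n+1}U_n\bigl(U_n^{2p}-U_{n+1}^{2p}\bigr) \;=\; 0.
\]
The symmetry $U_{-n}=U_n$ is used precisely here to kill the second sum, reducing the constraint to $\sum_n U_n^5 U_{n+1}=0$. A Riemann-sum comparison then shows this sum is $h^{-1}\int Q^6\,dx + O(1) > 0$ for small $h$, which is the actual contradiction. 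Note in particular that the symmetry hypothesis, which you remark does not enter your argument, is the key cancellation mechanism in the paper's proof.
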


Because the difference equations for the stationary bound states \eqref{bound-states}
do not allow a variational formulation due to the lack
of energy conservation, we are not able to study the existence problem
in the entire parameter region. However, we show numerically in Section \ref{sec-5}
by using the parameter continuation in $h$ that the single-site bound states of Theorem \ref{thm-existence} for $p \in (0,2)$
do not continue to the limit $h \to 0$ in accordance with Theorem \ref{thm-nonexistence}
because of the fold bifurcation with another family of stationary bound states.
The other family converges to the double-site solution in the anti-continuum limit $h \to \infty$ for $p \in (0,2)$.
Moreover, for $p \geq 2$, we show that the family of single-site bound states do not
continue in both limits $h \to 0$ and $h \to \infty$ because of fold bifurcations in each direction of $h$.

\section{Well-posedness of the discrete CSS system}
\setcounter{equation}{0}
\label{sec-3}

Here we consider well-posedness of the Cauchy problem associated with the discrete CSS system \eqref{dcss}.
In the end, we will prove Theorem \ref{lem-WPS}.

The discrete CSS system \eqref{dcss} consists of four equations for four unknowns, however,
the time evolution is only defined by the first three equations whereas
the last equation is a constraint. The following lemma states that
this constraint is invariant with respect to the time evolution.

\begin{lem}
\label{lem-redundancy}
Assume that the initial data satisfies
\begin{equation}
\label{initial-id}
\nabla_1^{+} A_2(0, n) -\frac12 |\phi (0, n)|^2=0,  \quad n \in \mathbb{Z}.
\end{equation}
Assume that there exists a solution to the discrete CSS system \eqref{dcss} with the given initial data
in the sense of Definition \ref{def-sol}. Then, for every $t \in [-T,T]$, the solution satisfies
\begin{align}\label{conser2}
\nabla_1^{+} A_2(t, n) -\frac12 |\phi (t, n)|^2=0, \quad n \in \mathbb{Z}.
\end{align}
\end{lem}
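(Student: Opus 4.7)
The plan is to set $F(t,n) := \nabla_1^{+} A_2(t, n) - \tfrac12 |\phi(t, n)|^2$, observe that by hypothesis $F(0,n) = 0$ for all $n$, and show that $\partial_t F(t,n) \equiv 0$ as long as the solution exists. Conservation of $F$ in time, combined with the given initial condition \eqref{initial-id}, immediately yields \eqref{conser2}.

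First I would differentiate $F$ and use the third equation of \eqref{dcss} to get
\[
\nabla_1^{+} \partial_t A_2(t,n) = \tfrac{1}{h}\bigl[-\mathrm{Im}(\bar\phi(t,n) D_1^{+}\phi(t,n)) + \mathrm{Im}(\bar\phi(t,n-1) D_1^{+}\phi(t,n-1))\bigr].
\]
For the other half of $\partial_t F$, I would use the Schr\"odinger-type first equation of \eqref{dcss} together with the identity $\partial_t|\phi|^2 = 2\mathrm{Re}(\bar\phi\,\partial_t\phi)$. The terms involving $A_0\phi$, $A_2^2\phi$, and $\lambda|\phi|^{2p}\phi$ all contribute purely imaginary multiples of $|\phi|^2$ or $|\phi|^{2p+2}$ inside $i\bar\phi\,\partial_t\phi$, so they drop from the real part. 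What survives is
\[
\tfrac12\partial_t |\phi(t,n)|^2 = -\mathrm{Im}\bigl(\bar\phi(t,n)\, D_1^{-} D_1^{+}\phi(t,n)\bigr).
\]

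The key technical step, and the main obstacle, is a gauge-covariant discrete summation-by-parts identity stating
\[
\mathrm{Im}\bigl(\bar\phi(t,n)\, D_1^{-} D_1^{+}\phi(t,n)\bigr) = \nabla_1^{-}\,\mathrm{Im}\bigl(\bar\phi(t,n)\,D_1^{+}\phi(t,n)\bigr).
\]
To prove this I would expand $D_1^{-} D_1^{+}\phi(t,n)$ using the definitions \eqref{covariant}, rewrite the phase factor via the conjugated identity $\bar\phi(t,n)\,e^{ih A_1(t,n-\tfrac12)} = h\,\overline{D_1^{+}\phi(t,n-1)} + \bar\phi(t,n-1)$, and recognize the extra cross term as $h|D_1^{+}\phi(t,n-1)|^2$, which is real and hence invisible to $\mathrm{Im}(\cdot)$. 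This is where the careful choice of where to place the link variables $e^{\pm ihA_1(t,n\pm\tfrac12)}$ in the definition of the covariant difference pays off and makes the gauge-invariant structure of \eqref{dcss} manifest.

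Combining the two computations, both $\nabla_1^{+}\partial_t A_2(t,n)$ and $\tfrac12\partial_t|\phi(t,n)|^2$ equal $\tfrac{1}{h}[J(t,n-1) - J(t,n)]$ with $J(t,n) := \mathrm{Im}(\bar\phi(t,n)\,D_1^{+}\phi(t,n))$, so their difference $\partial_t F(t,n)$ vanishes pointwise in $n$. Since $\phi \in C^1([-T,T],\ell^2(\mathbb{Z}))$ and $A_2 \in C^1([-T,T],\ell^\infty(\mathbb{Z}))$ by Definition \ref{def-sol}, each $F(\cdot,n)$ is $C^1$ in $t$ and its time derivative is identically zero, whence $F(t,n) = F(0,n) = 0$ for every $t \in [-T,T]$ and every $n \in \mathbb{Z}$. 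The only nontrivial piece is the discrete covariant Leibniz identity above; the remainder is bookkeeping.
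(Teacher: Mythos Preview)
Your proposal is correct and follows essentially the same strategy as the paper. The paper packages the key step as the identity $\nabla_1^+(\bar\phi D_1^+\phi)(n)=|D_1^+\phi(n)|^2+\bar\phi(n+1)D_1^-D_1^+\phi(n+1)$, whose imaginary part (after a shift $n\to n-1$) is exactly your gauge-covariant summation-by-parts identity; the remaining bookkeeping is identical.
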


\begin{proof}
We note the following identity:
\begin{align}
\label{alg}
\nabla_1^+ (\bar{\phi} D_1^+ \phi)(n) =  \overline{D_1^+ \phi}(n) D_1^+ \phi (n) +  \bar{\phi}(n+1) D_1^{-} D^{+}_1 \phi(n+1).
\end{align}
Using the first three equations of the system \eqref{dcss} and the identity \eqref{alg}, we obtain
\begin{align*}
\partial_t ( \nabla_1^{+} A_2 (t, n) -\frac12 |\phi (t, n)|^2 )
 &= - \nabla_1^{+}\mbox{Im} (\bar{\phi}(t, n-1) D_1^{+} \phi (t, n-1)) -\mbox{Im}(i \bar{\phi} \partial_t \phi) (t, n)\\
 &= - \mbox{Im} (\bar{\phi}(t, n) D_1^{-}D_1^{+} \phi(t, n))-\mbox{Im}(i \bar{\phi} D_0\phi)(t, n) \\
 &= - \mbox{Im} \left( \bar{\phi}(t, n)   (i D_0\phi+ D_1^{-}D_1^{+} \phi )(t, n) \right) \\
 & =0.
\end{align*}
Due to this conservation, the relation \eqref{conser2} remains true for every $t \in [-T,T]$ as long as
a solution to the discrete CSS system \eqref{dcss} with the given initial data satisfying the constraint \eqref{initial-id} exists
in the sense of Definition \ref{def-sol}.
\end{proof}

\begin{rem}
\label{rem-redundancy}
The third and fourth equations of the system \eqref{dcss} represents the balance equation
for the scalar field $\phi$. Indeed, eliminating $A_2$ by
$$
\partial_t \nabla_1^+ A_2(t,n) = \nabla_1^+ \partial_t A_2(t,n),
$$
we obtain
\begin{equation}
\label{bal-1}
\frac{1}{2} \partial_t |\phi(t,n)|^2 + \nabla_1^+ \mbox{\rm Im} (\bar{\phi}(t,n-1) D_1^+ \phi(t,n-1)) = 0,
\end{equation}
which follows from the first equation of the system \eqref{dcss} thanks to the identity \eqref{alg}.
\end{rem}

\begin{rem}
\label{rem-mass}
Summing up the balance equation \eqref{bal-1} in $n \in \mathbb{Z}$ yields
\begin{align}\label{conser1}
\frac{d}{dt} \sum_{n \in \mathbb{Z}} |\phi(t,n)|^2  =0,
\end{align}
for the solution $\phi \in C^1([-T,T],\ell^2(\mathbb{Z}))$. This implies conservation of mass $M$
given by \eqref{mass-discrete}. The conservation of mass $M$ in the discrete CSS system \eqref{dcss}
generalizes the conservation of mass $M$ given by \eqref{mass} for the continuous CSS system \eqref{css}.
However, the discrete CSS system \eqref{dcss} does not exhibit conservation of energy
which would be similar to the conservation of the total energy $E$ given by \eqref{energy} for the
continuous CSS system \eqref{css}.
\end{rem}

It follows from Lemma \ref{lem-redundancy} that the system \eqref{dcss} is under-determined since it consists of three
time evolution equations on four unknown fields, whereas the fourth equation represents a constrained preserved in
the time evolution. In order to close the system, we add a gauge condition, thanks to the gauge invariance
of the discrete CSS system \eqref{dcss} expressed by the following lemma.

\begin{lem}
\label{lem-gauge}
Assume there exists a solution to the discrete CSS system \eqref{dcss} in the sense of Definition $\ref{def-sol}$.
Let $\chi$ be an arbitrary function in $C^1([-T,T],\ell^{\infty}(\mathbb{Z}))$. Then,
the following gauge-transformed functions
\begin{align}
\label{gauge}
\left\{
\begin{aligned}
& \tilde{\phi}(t,\,n) = e^{i \chi(t,\,n)}\phi(t,\,n),\\
& \tilde{A_0} (t,\,n)= A_0 (t,\,n)+ \partial_t \chi (t,\,n),\\
& \tilde{A_1} (t,\,n+\frac12)= A_1 (t,\,n+\frac12)+ \nabla_1^{+} \chi (t,\,n),\\
& \tilde{A_2} (t,\,n)= A_2(t,\,n),\end{aligned} \right.
\end{align}
also provide a solution to the discrete CSS system \eqref{dcss} in the sense of Definition $\ref{def-sol}$.
\end{lem}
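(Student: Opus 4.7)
The plan is to exploit the fact that the discrete covariant derivatives transform covariantly under the gauge transformation \eqref{gauge}, in direct analogy with the continuous theory. Concretely, I will first prove as an auxiliary step the transformation identities
\begin{align*}
\tilde{D}_0\tilde{\phi}(t,n) &= e^{i\chi(t,n)} D_0\phi(t,n), \\
\tilde{D}_1^{+}\tilde{\phi}(t,n) &= e^{i\chi(t,n)} D_1^{+}\phi(t,n), \\
\tilde{D}_1^{-}\tilde{\phi}(t,n) &= e^{i\chi(t,n)} D_1^{-}\phi(t,n).
\end{align*}
The first is immediate since $(\partial_t - i\tilde{A}_0)(e^{i\chi}\phi)$ expands so that the $i(\partial_t\chi)\phi$ produced by the product rule cancels against the added $\partial_t\chi$ inside $\tilde{A}_0$. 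The second follows from the crucial discrete identity $h\,\nabla_1^+\chi(t,n)=\chi(t,n+1)-\chi(t,n)$: plugging $\tilde{A}_1(t,n+\tfrac12)=A_1(t,n+\tfrac12)+\nabla_1^+\chi(t,n)$ into the exponential prefactor in \eqref{covariant} produces exactly a compensating phase $e^{-i\chi(t,n+1)}$, which combines with $e^{i\chi(t,n+1)}$ from $\tilde\phi(t,n+1)$ to leave an overall factor $e^{i\chi(t,n)}$. The third identity is analogous, using $h\,\nabla_1^+\chi(t,n-1)=\chi(t,n)-\chi(t,n-1)$.

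Next, iterating the identity for $D_1^+$ with the one for $D_1^-$ yields $\tilde{D}_1^-\tilde{D}_1^+\tilde{\phi}(t,n)=e^{i\chi(t,n)}D_1^-D_1^+\phi(t,n)$. Together with $\tilde{A}_2=A_2$ and $|\tilde\phi|=|\phi|$, this gives
\[
i\tilde{D}_0\tilde{\phi} + \tilde{D}_1^-\tilde{D}_1^+\tilde{\phi} - \tilde{A}_2^2\tilde{\phi} + \lambda|\tilde{\phi}|^{2p}\tilde{\phi} = e^{i\chi}\bigl(iD_0\phi + D_1^-D_1^+\phi - A_2^2\phi + \lambda|\phi|^{2p}\phi\bigr) = 0,
\]
verifying the first equation in \eqref{dcss} for the tilded fields.

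For the second equation, since spatial and temporal finite-difference operators commute, $\partial_t\tilde{A}_1(t,n+\tfrac12)-\nabla_1^+\tilde{A}_0(t,n)$ equals $\partial_t A_1(t,n+\tfrac12)-\nabla_1^+A_0(t,n)$, and the right-hand side $-\tilde{A}_2|\tilde\phi|^2=-A_2|\phi|^2$ is also unchanged. The third equation follows from $\partial_t\tilde{A}_2=\partial_tA_2$ together with the gauge invariance of $\bar\phi(t,n-1)D_1^+\phi(t,n-1)$: the phases $e^{-i\chi(t,n-1)}$ from $\bar{\tilde\phi}$ and $e^{i\chi(t,n-1)}$ from $\tilde{D}_1^+\tilde\phi$ cancel. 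The fourth equation is immediate because $\tilde{A}_2=A_2$ and $|\tilde\phi|^2=|\phi|^2$. Finally, the regularity assumption $\chi\in C^1([-T,T],\ell^\infty(\mathbb{Z}))$ ensures that the tilded fields remain in the function classes of Definition \ref{def-sol}.

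The main technical point to get right is the phase bookkeeping in the covariant-derivative step; this is the only place where the discrete link variable $e^{\pm ihA_1}$ interacts nontrivially with the pointwise gauge rotation $e^{i\chi(t,n)}$, and this is precisely why the gauge increment $\nabla_1^+\chi$ (rather than $\partial_x\chi$) appears in the transformation rule for $A_1$. Once the three covariant-derivative identities are established, verification of the four equations in \eqref{dcss} reduces to substitution.
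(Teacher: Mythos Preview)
Your proposal is correct and follows essentially the same approach as the paper: both proofs hinge on establishing the covariant transformation laws $\tilde{D}_0\tilde\phi=e^{i\chi}D_0\phi$ and $\tilde{D}_1^{\pm}\tilde\phi=e^{i\chi}D_1^{\pm}\phi$ via the phase identity $e^{-ih\nabla_1^+\chi(t,n)}e^{i\chi(t,n+1)}=e^{i\chi(t,n)}$, together with the invariance of $\partial_t A_1-\nabla_1^+A_0$. You are in fact a bit more explicit than the paper in checking the third and fourth equations of \eqref{dcss} and in noting the iteration $\tilde{D}_1^-\tilde{D}_1^+\tilde\phi=e^{i\chi}D_1^-D_1^+\phi$, but there is no substantive difference in method.
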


\begin{proof}
We proceed with the explicit computations:
\begin{align*}
D_0 \tilde{\phi} & = \partial_t \tilde{\phi}(t, \,n)-i \tilde{A_0}(t, \,n) \tilde{\phi}(t, \,n)\\
& = e^{i\chi (t, \,n)}\left(\partial_t \phi(t, \,n)-i A_0(t, \,n) \phi(t, \,n) \right) = e^{i \chi(t,n)} D_0 \phi,\\
D_1^+ \tilde{\phi} & = e^{-ih \tilde{ A_1}( t, \,n+\frac12)}\tilde{\phi}(t, \,n+ 1)-  \tilde{\phi}(t,\, n) \\
&= e^{i\chi (t, \,n)}\left( e^{-ih  A_1( t, \,n+\frac12)}\phi(t, \,n+ 1)-  \phi(t,\, n) \right) = e^{i \chi(t,n)} D_1^+ \phi,\\
D_1^- \tilde{\phi} & = \tilde{\phi}(t, \,n)-e^{ih \tilde{ A_1}( t, \,n-\frac12)}  \tilde{\phi}(t,\, n-1) \\
&= e^{i\chi (t, \,n)}\left( \phi(t,\,n)-e^{ih A_1(t,\, n-\frac12)}  \phi(t,\, n-1)\right) = e^{i \chi(t,n)} D_1^- \phi,
\end{align*}
and
\begin{align*}
\partial_t \tilde{A_1}(n+\frac12, t) -\nabla_1^{+} \tilde{A_0} (n, t) &= \partial_t A_1(n+\frac12, t) -\nabla_1^{+} A_0 (n, t),
\end{align*}
where we have used
$$
e^{-i h \nabla_1^+ \chi (t, n)}   e^{i \chi (t, n+1)}=e^{i \chi(t, n)} \quad \mbox{\rm and} \quad
e^{i h \nabla_1^+ \chi (t, n-1)}   e^{i \chi (t, n-1)}=e^{i \chi(t, n)}.
$$
Under the conditions of the lemma, $\partial_t \chi$ and $\nabla_1^+ \chi$ are defined in $\ell^{\infty}(\mathbb{Z})$
for every $t \in [-T,T]$. Thanks to the transformation above, both $(\phi,A_0,A_1,A_2)$ and
$(\tilde{\phi},\tilde{A}_0,\tilde{A}_1,\tilde{A}_2)$ are solutions of the same system \eqref{dcss}.
\end{proof}

\begin{rem}
If the standard difference method is used to express the continuous covariant derivative $D_1 \phi$ by
its discrete counterparts $D_1^{\pm} \phi$ in the form:
\[
D_1^{\pm}\phi = \nabla_1^{\pm} \phi -i A_1 \phi,
\]
then the resulting discrete CSS system is not gauge invariant.
This remark illustrates the importance of using the discrete covariant derivatives in the form \eqref{covariant}.
\end{rem}

It follows from the gauge transformation \eqref{gauge} of Lemma \ref{lem-gauge} that a solution to the discrete CSS system \eqref{dcss}
is formed by a class of gauge equivalent field $(\phi, \,A_0,\, A_1, \,A_2)$. Two types of gauge conditions
are typically considered to break the gauge symmetry: either $A_0(t,n) = 0$ by appropriate choice
of $\partial_t \chi$ or $A_1(t,n + \frac{1}{2}) = 0$ by appropriate choice of $\nabla_1^+ \chi$.

In the continuous Maxwell (Yang-Mills) or Chern-Simons gauge equations, the temporal gauge condition $A_0\equiv 0$ has been used by several authors
\cite{EM1, GV,  P}. In the space of $(1+1)$ dimensions, the spatial gauge condition  $A_1\equiv 0$ was used in \cite{Huh, Huh-d, Huh2}
to simplify the related system of  equations. Here in the discrete setting, we will use the spatial gauge condition
for the same purpose and set $A_1(t,n + \frac{1}{2}) = 0$.

The discrete CSS system \eqref{dcss} with $A_1 \equiv 0$ simplifies to the form:
\begin{align}\label{dcssr}
\left\{
 \begin{aligned}
& i \partial_t \phi (t, n) + A_0 (t, n) \phi (t, n) +\nabla_1^{-} \nabla_1^+ \phi(t, n) - A_2^2(t, n) \phi(t, n) \\
& \phantom{texttexttexttext} + \lambda |\phi(t, n)|^{2p}\,\phi(t, n)=0, \\
&  \nabla_1^{+} A_0 (t, n)=  A_2(t, n) |\phi(t, n)|^2,\\
& \nabla_1^{+} A_2 (t, n) =\frac12 |\phi (t, n)|^2,
 \end{aligned}\right.
\end{align}
where we removed the redundant time evolution equation for $A_2$ thanks to the results in
Lemma \ref{lem-redundancy} and Remark \ref{rem-redundancy}. We show well-posedness
of the initial-value problem for the coupled system \eqref{dcssr}, which yields
the proof of Theorem \ref{lem-WPS}.

\vspace{0.2cm}

\begin{pf1}{\it of Theorem \ref{lem-WPS}}.
By Lemmas \ref{lem-redundancy} and \ref{lem-gauge}, the constraints described in
the second and third equations of the system \eqref{dcssr} are preserved in the time
evolution of the first equation of the system \eqref{dcssr}. The initial data
$A_0(0,n) = \mathcal{A}_n$ and $A_2(0,n) = \mathcal{B}_n$ satisfy the consistency
conditions \eqref{consistency} which agree with the second and third equations of the system \eqref{dcssr}.

By inverting the difference operators under the boundary conditions \eqref{boundary-values},
we derive the closed-form solutions for $A_0$ and $A_2$:
 \begin{align*}
 A_2(t, n) = \beta - \frac{h}{2} \sum_{k=n}^{\infty} |\phi(t, k)|^2,   \quad
 A_0(t, n) = \alpha - h \sum_{k=n}^{\infty} A_2(t,k)|\phi(t, k)|^2,
 \end{align*}
which yield the bounds
 \begin{align}
 \label{imp-bounds}
& |A_2(t,n)| \leq |\beta| + \frac{h}{2} \|\phi(t,\cdot)\|^2_{\ell^2}, \\
& |A_0(t,n)| \leq |\alpha| +  h \left( |\beta| + \frac{h}{2} \|\phi(t,\cdot)\|^2_{\ell^2} \right)   \|\phi(t,\cdot)\|^2_{\ell^2}.
 \label{imp-bounds-new}
 \end{align}
Thanks to these bounds, the initial-value problem
for the system \eqref{dcssr} can be written as an integral
equation on $\phi$ in the space of continuous functions of time with range in $\ell^2(\mathbb{Z})$.

Local well-posedness on a small time interval $(-\tau,\tau)$ with $\tau > 0$
follows from the contraction mapping theorem (see, e.g., \cite{KLS})
thanks to the Banach algebra property of $\ell^2(\mathbb{Z})$,
bounds on the linear operator $\nabla_1^- \nabla_1^+$ in $\ell^2(\mathbb{Z})$
and bounds \eqref{imp-bounds}--\eqref{imp-bounds-new}. Global well-posedness in $\ell^2(\mathbb{Z})$  follows from
the mass conservation $\| \phi(t,\cdot) \|^2_{\ell^2} = \| \Phi \|^2_{\ell^2}$,
where $\phi(0,n) = \Phi_n$.
\end{pf1}

\section{Existence of stationary bound states}
\setcounter{equation}{0}
\label{sec-4}

Here we consider the existence of stationary bound states for the discrete CSS system \eqref{dcss} with
the gauge condition $A_1 \equiv 0$. In the end, we will prove Theorems \ref{thm-existence} and \ref{thm-nonexistence}.

The last two equations of the system \eqref{dcssr} allow us to reduce
$A_0$ and $A_2$ to only one variable $g(t,n) := A_0(t,n) - A_2^2(t,n)$ since
\begin{align*}
\nabla_1^{+}(A_0-A_2^2)&=A_2(t,n)|\phi(t,n)|^2-\frac12 |\phi(t,n)|^2 (A_2(t,n+1) + A_2(t,n))\\
&=-\frac{h}{2} |\phi(t,n)|^2 \nabla_1^{+}A_2(t,n) \\
&= -\frac{h}{4} |\phi(t,n)|^4.
\end{align*}
Thus, the system \eqref{dcssr} can be closed at the following system of two equations:
\begin{align}\label{our}
\left\{
 \begin{aligned}
& i \partial_t \phi(t,n)  + \nabla_1^{-} \nabla_1^+ \phi(t,n) + g(t,n) \phi(t,n) + \lambda |\phi(t,n)|^{2p} \,\phi(t, n)=0, \\
& \nabla_1^{+} g(t, n)= -\frac{h}{4} |\phi(t, n)|^4.
 \end{aligned}
 \right.
\end{align}

In the continuum limit $h \to 0$, the second equation of the system \eqref{our} yields
$\partial_x g = 0$, which is solved by $g = 0$ up to an arbitrary constant (see Remark \ref{rem-gamma}),
whereas the first equation of the system \eqref{our} yields formally the continuous NLS equation
\begin{align}\label{conti}
i \partial_t \varphi  +\partial_{x}^2 \varphi + \lambda |\varphi|^{2p}\varphi=0,
\end{align}
where $\varphi(t,x)$ is assumed to be a smooth function such that $\phi(t,n) = \varphi(t,hn)$, $n \in \Bbb{Z}$.
The continuous NLS equation \eqref{conti} also follows from integration of the continuous CSS system \eqref{css}
with the gauge condition $A_1\equiv 0$ (see \cite{Huh} for details).

The gauge field $g$ does not appear in the continuous NLS equation \eqref{conti}. The same continuous NLS equation
\eqref{conti} is also derived in the continuum limit $h \to 0$ of the standard discrete NLS equation:
\begin{equation}
\label{dnls}
 i \partial_t \phi(t,n)  + \nabla_1^{-} \nabla_1^+ \phi(t,n) + \lambda |\phi(t,n)|^{2p}\,\phi(t,n) = 0.
\end{equation}
The discrete NLS equation \eqref{dnls} was investigated in many recent studies (see, e.g., \cite{panos-book,Pelin-book}).
In particular, it admits a large set of stationary bound states, which includes the ground state of energy at
fixed mass \cite{Weinstein}. In the cubic case $p = 1$, the ground state exists for every $h > 0$ and converges
in the continuum limit $h \to 0$ to the single-humped solitary wave of the continuous NLS equation \eqref{conti}
and in the anti-continuum limit $h \to \infty$ to a single-site solution \cite{panos-book,Pelin-book}.
We will show that these properties of the ground state in the discrete NLS equation \eqref{dnls} are very different
from properties of the stationary bound states in the discrete CSS system \eqref{our}.

Substituting
\[
\phi(t,n) = e^{i \Omega t} U_n, \quad g(t,n) = G_n, \quad t \in \mathbb{R}, \quad n \in \mathbb{Z}
\]
into the discrete CSS system \eqref{our} yields the following system of difference equations for
sequences $U := \{ U_n \}_{n \in \mathbb{Z}}$ and $G := \{ G_n \}_{n \in \mathbb{Z}}$:
 \begin{align}\label{equ23}
 \left\{
 \begin{aligned}
&     \frac{1}{h^2} \left( U_{n+1}-2 U_{n}+ U_{n-1} \right)-\Omega U_n + G_n U_n+ \lambda |U_n|^{2p} U_n=0, \\
& G_{n+1}-G_n= -\frac{h^2}{4} |U_n|^4.
 \end{aligned}
\right.
\end{align}

\begin{rem}
There are two critical exponents $p = 1$ and $p = 2$ in the system \eqref{equ23}.
For $p = 1$, the lattice spacing parameter $h$ can be scaled out thanks to the scaling transformation:
\begin{equation}
\label{scaling-cubic}
p = 1: \quad \tilde{U} = h U, \quad \tilde{G} = h^2 G, \quad \tilde{\Omega} = h^2 \Omega,
\end{equation}
where $\tilde{U}$, $\tilde{G}$, and $\tilde{\Omega}$ solve the same system \eqref{equ23} but with $h = 1$.
For $p = 2$, the nonlinear terms in the two equations of the system \eqref{equ23} have the same
exponents.
\label{rem-critical}
\end{rem}

In order to prove persistence of single-site solutions in the anti-continuum limit $h \to \infty$,
we close the system \eqref{equ23} with the following relation:
\begin{equation}
\label{tilde-G}
G_n = \gamma + \frac{h^2}{4} \sum_{k=n}^{\infty} |U_k|^4,
\end{equation}
where $\gamma := \lim_{n \to \infty} G_n$ is another parameter and $U \in \ell^4(\mathbb{Z})$ is assumed.

\begin{rem}
Parameter $\gamma$ in \eqref{tilde-G} can be set to $0$ without loss of generality thanks to the
transformation
$$
\phi(t,n) \mapsto \tilde{\phi}(t,n) = \phi(t,n) e^{i \gamma t}
$$
between two solutions to the discrete CSS system \eqref{our}.
\label{rem-gamma}
\end{rem}

By Remark \ref{rem-gamma}, we set $\gamma = 0$ and substitute \eqref{tilde-G} into the first equation
of the system \eqref{equ23}. This yields the root-finding problem $F(U,h) = 0$, where
 \begin{align}\label{root}
[F(U,h)]_n := \left( \lambda |U_n|^{2p} - \Omega + \frac{h^2}{4} \sum_{k=n}^{\infty} |U_k|^4 \right) U_n
+ \frac{1}{h^2} \left( U_{n+1}-2 U_{n} + U_{n-1} \right), \quad n \in \mathbb{Z}.
\end{align}
For the proof of Theorem \ref{thm-existence},
parameters $\lambda$, $\Omega$,  and $p$ are fixed, whereas $h$ is considered to be large.
The following lemma shows that the vector field in \eqref{root} is closed if $U \in \ell^1(\mathbb{Z})$.

\begin{lem}
\label{lem-1}
The mapping $F(U,h) : \ell^1(\mathbb{Z}) \times \mathbb{R}^+ \mapsto \ell^1(\mathbb{Z})$
is $C^1$ if $p > 0$.
\end{lem}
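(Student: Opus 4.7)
The plan is to split the vector field into a linear part, a pointwise power nonlinearity, and a coupling term, and to analyze the three pieces separately. Write $F = F_1 + F_2 + F_3$ with
\begin{align*}
[F_1(U,h)]_n &:= -\Omega U_n + h^{-2}(U_{n+1} - 2U_n + U_{n-1}), \\
[F_2(U)]_n &:= \lambda |U_n|^{2p} U_n, \\
[F_3(U,h)]_n &:= \tfrac{h^2}{4}\, S_n(U)\, U_n, \qquad S_n(U) := \sum_{k \geq n} |U_k|^4.
\end{align*}
Using the embedding $\|U\|_{\ell^q} \leq \|U\|_{\ell^1}$ for every $q \geq 1$, each piece lands in $\ell^1$ with polynomial bounds $\|F_2(U)\|_{\ell^1} \leq \lambda \|U\|_{\ell^1}^{2p+1}$ and $\|F_3(U,h)\|_{\ell^1} \leq (h^2/4)\|U\|_{\ell^1}^{5}$, which settles well-definedness.

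The operator $F_1$ is linear in $U$ (the shifts are isometries on $\ell^1$) and rational in $h > 0$, so trivially $C^\infty$. The coupling $F_3$ is also smooth via a chain-rule decomposition: the map $U \mapsto \{|U_n|^4\}_n$ is polynomial in the real and imaginary parts of each $U_n$ and bounded into $\ell^1$; the tail-sum $W \mapsto \{\sum_{k \geq n} W_k\}_n$ is linear bounded from $\ell^1$ into $\ell^\infty$; and pointwise multiplication is bilinear bounded from $\ell^\infty \times \ell^1$ into $\ell^1$. Composing these yields $F_3 \in C^\infty(\ell^1(\mathbb{Z}) \times \mathbb{R}^+, \ell^1(\mathbb{Z}))$.

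The substantive step is the $C^1$ regularity of $F_2$. Writing $|z|^{2p} z = (z\bar z)^p z$ and computing with Wirtinger derivatives, the candidate Fr\'echet derivative is the $\mathbb{R}$-linear operator
\begin{equation*}
[DF_2(U) V]_n = \lambda (p+1) |U_n|^{2p} V_n + \lambda p |U_n|^{2p-2} U_n^2\, \overline{V_n},
\end{equation*}
extended by $0$ when $U_n = 0$; this is consistent for $p > 0$ because $|z|^{2p-2} z^2 \to 0$ as $z \to 0$. The pointwise estimate $|[DF_2(U)V]_n| \leq (2p+1)\lambda |U_n|^{2p} |V_n|$ together with $\|U\|_{\ell^\infty} \leq \|U\|_{\ell^1}$ gives $\|DF_2(U)\|_{\ell^1 \to \ell^1} \leq (2p+1)\lambda \|U\|_{\ell^1}^{2p}$. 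To verify Fr\'echet differentiability I would apply a scalar Taylor remainder to $f(z) = |z|^{2p}z$ at each lattice site, treating two regimes separately: for $p \geq 1/2$ the derivative $Df$ is locally Lipschitz, so the remainder is $O(|V_n|^2)$ and $\sum_n |V_n|^2 \leq \|V\|_{\ell^\infty}\|V\|_{\ell^1} \leq \|V\|_{\ell^1}^2 = o(\|V\|_{\ell^1})$; for $0 < p < 1/2$ the subadditivity inequality $\bigl||a|^{2p} - |b|^{2p}\bigr| \leq |a - b|^{2p}$ (valid because $2p \leq 1$) yields the H\"older-type bound $|r_n| \leq C |V_n|^{2p+1}$ and hence $\sum_n |r_n| \leq \|V\|_{\ell^\infty}^{2p} \|V\|_{\ell^1} \leq \|V\|_{\ell^1}^{2p+1} = o(\|V\|_{\ell^1})$. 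The same subadditivity estimate delivers $\|DF_2(U) - DF_2(U')\|_{\ell^1 \to \ell^1} \leq C\|U - U'\|_{\ell^1}^{\min(2p,1)}$, which is H\"older continuous and in particular continuous in $U$.

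The main obstacle is precisely the low-regularity case $0 < p < 1/2$, where the scalar nonlinearity $f(z)=|z|^{2p}z$ is merely $C^{1,2p}$ on $\mathbb{C}$ and no Lipschitz estimate on $Df$ is available. The crucial tools that still yield $C^1$ regularity on $\ell^1$ are the fractional-power subadditivity for $2p \leq 1$ and the embedding $\ell^1 \hookrightarrow \ell^\infty$ with constant one: the latter converts an $\ell^\infty$-type H\"older bound on the remainder into an $\ell^1$-type bound with enough decay to beat $\|V\|_{\ell^1}$, and the joint $C^1$ regularity in $(U,h)$ then follows at once from smoothness of the $h$-dependence in $F_1$ and $F_3$.
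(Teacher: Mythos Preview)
Your argument is correct and follows the same overall strategy as the paper: both establish that $F$ maps $\ell^1$ into $\ell^1$ via the embedding $\ell^1 \hookrightarrow \ell^q$ and the double-sum estimate $\sum_n \sum_{k\ge n}|U_k|^4|U_n|\le \|U\|_{\ell^1}^5$, and both observe that the $h$-dependence is rational. The paper, however, dispatches the $C^1$ claim in one sentence (``it depends on powers of $U$ \dots\ therefore it is $C^1$''), whereas you supply the actual verification: your decomposition $F=F_1+F_2+F_3$, the chain-rule factorization of $F_3$ through the bounded linear tail map $\ell^1\to\ell^\infty$ and the bilinear product $\ell^\infty\times\ell^1\to\ell^1$, and especially your case split for $F_2$ (Lipschitz derivative when $p\ge 1/2$, subadditivity $\bigl||a|^{2p}-|b|^{2p}\bigr|\le |a-b|^{2p}$ when $2p\le 1$) are details the paper omits. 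In short, you have filled in the part of the proof that the paper leaves to the reader, and the low-$p$ analysis via H\"older continuity of $Df$ combined with $\ell^1\hookrightarrow\ell^\infty$ is exactly the right mechanism.
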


\begin{proof}
The discrete Laplacian is a bounded operator as
$$
\| \nabla_1^- \nabla_1^+ U \|_{\ell^1} \leq \frac{4}{h^2} \| U \|_{\ell^1},
$$
whereas the nonlinear term is closed in $\ell^1(\mathbb{Z})$ thanks to the continuous embeddings
of $\ell^1(\mathbb{Z})$ to $\ell^4(\mathbb{Z})$ and the elementary inequality
$$
\sum_{n \in \mathbb{Z}} \sum_{k=n}^{\infty} |U_k|^4 |U_n| = \sum_{k \in \mathbb{Z}} |U_k|^4 \sum_{n = -\infty}^{k} |U_n|
\leq \| U \|_{\ell^1} \| U \|_{\ell^4}^4 \leq \| U \|_{\ell^1}^5.
$$
The mapping $F(U,h) : \ell^1(\mathbb{Z}) \times \mathbb{R}^+ \mapsto \ell^1(\mathbb{Z})$ is closed
and locally bounded. It depends on powers of $U$ and linear terms of $h^2$ and $1/h^2$. Therefore, it is $C^1$
for every $U \in \ell^1(\mathbb{Z})$ and $h \in \mathbb{R}^+$.
\end{proof}

The local part of $F(U,h)$ leads to the root-finding equation
\begin{equation}
\label{root-again}
\lambda \mathcal{U}^{2p} - \Omega + \frac{h^2}{4} \mathcal{U}^4 = 0,
\end{equation}
for which we are only interested in the positive roots for $\mathcal{U}$. The following lemma
controls uniqueness and the asymptotic expansion of the positive roots of the nonlinear equation \eqref{root-again} as $h \to \infty$.

\begin{lem}
\label{lem-0}
Fix $\Omega > 0$, $\lambda > 0$, and $p > 0$. For every $h > 0$, there is only one positive root
of the nonlinear equation \eqref{root-again} labeled as $\mathcal{U}(h)$. Moreover,
\begin{equation}
\label{root-asymptotic}
\mathcal{U}(h) = \frac{\sqrt[4]{4 \Omega}}{\sqrt{h}} \left[ 1 + \mathcal{O}(h^{-p}) \right] \quad \mbox{\rm as} \quad h \to \infty.
\end{equation}
\end{lem}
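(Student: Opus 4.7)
The plan is to establish existence, uniqueness, and asymptotics of the positive root of
$$f(\mathcal{U}) := \lambda \mathcal{U}^{2p} + \tfrac{h^2}{4}\mathcal{U}^4 - \Omega = 0$$
by elementary monotonicity and a scaling ansatz.

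First I would prove uniqueness. Observe that $f(0) = -\Omega < 0$ and $f(\mathcal{U}) \to +\infty$ as $\mathcal{U} \to +\infty$. Since
$$f'(\mathcal{U}) = 2p\lambda \mathcal{U}^{2p-1} + h^2 \mathcal{U}^3 > 0 \quad \text{for } \mathcal{U} > 0,$$
the function $f$ is strictly increasing on $(0,\infty)$, so the intermediate value theorem yields a unique positive root $\mathcal{U}(h)$. This works for every $h > 0$, $\Omega > 0$, $\lambda > 0$, $p > 0$.

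Next I would extract the leading-order asymptotic as $h \to \infty$. A preliminary observation is that $\mathcal{U}(h) \to 0$: otherwise the term $\tfrac{h^2}{4}\mathcal{U}^4$ would diverge, contradicting $f(\mathcal{U}(h)) = 0$. With $\mathcal{U}(h) \to 0$, the dominant balance in $f = 0$ must be between $\tfrac{h^2}{4}\mathcal{U}^4$ and $\Omega$, forcing $\mathcal{U}(h)^4 \sim 4\Omega/h^2$, i.e.\ $\mathcal{U}(h) \sim (4\Omega)^{1/4}/\sqrt{h}$.

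To sharpen this to the error bound $\mathcal{O}(h^{-p})$, I would use the rescaling ansatz
$$\mathcal{U}(h) = \frac{(4\Omega)^{1/4}}{\sqrt{h}}\bigl(1 + \epsilon(h)\bigr)$$
and substitute into $f = 0$. The quartic term becomes $\Omega(1+\epsilon)^4$, while the $2p$-nonlinearity contributes $\lambda (4\Omega)^{p/2} h^{-p}(1+\epsilon)^{2p}$. Dividing by $\Omega$, the equation reduces to
$$(1+\epsilon)^4 - 1 = -\frac{\lambda (4\Omega)^{p/2-1}}{h^p}(1+\epsilon)^{2p}.$$
Writing $G(\epsilon, h) := (1+\epsilon)^4 - 1 + \lambda \Omega^{-1}(4\Omega)^{p/2} h^{-p}(1+\epsilon)^{2p}$, one has $G(0,\infty) = 0$ and $\partial_\epsilon G(0,\infty) = 4 \neq 0$, so the implicit function theorem (applied in the variable $s := h^{-p}$) gives a unique smooth branch $\epsilon = \epsilon(h)$ with $\epsilon = \mathcal{O}(h^{-p})$ as $h \to \infty$. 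By the uniqueness already proven, this branch coincides with the positive root $\mathcal{U}(h)$, yielding \eqref{root-asymptotic}.

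The only mildly delicate point is the last step: one must confirm that the $\mathcal{O}(h^{-p})$ branch produced by the implicit function theorem is indeed the one that corresponds to the unique positive root of the original equation. This follows from the monotonicity of $f$ established in the first step, since any solution with $\mathcal{U}(h) \to 0$ must satisfy the asymptotic $\mathcal{U}(h) \sim (4\Omega)^{1/4}/\sqrt{h}$, and the implicit function theorem then controls the correction uniquely.
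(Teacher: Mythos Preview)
Your proof is correct and follows essentially the same approach as the paper: monotonicity of $f$ for uniqueness, then the rescaling $\mathcal{U} = h^{-1/2}\mathcal{V}$ (your $(4\Omega)^{1/4}(1+\epsilon)$ is just a further normalization of the paper's $\mathcal{V}$) combined with the implicit function theorem in the small parameter $\textsf{h} = h^{-p}$ to obtain the $\mathcal{O}(h^{-p})$ correction. The only differences are cosmetic --- you include an explicit dominant-balance heuristic and parametrize the rescaled variable multiplicatively rather than additively, but the substance is identical.
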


\begin{proof}
Since the function $\mathbb{R}^+ \ni x \mapsto \lambda x^{2p} + \frac{h^2}{4} x^2 \in \mathbb{R}^+$ is monotonically increasing,
there is exactly one intersection of its graph with the level $\Omega > 0$. Therefore, there exists only one positive root
of the nonlinear equation \eqref{root-again} labeled as $\mathcal{U}(h)$. By using scaling $\mathcal{U} = \mathcal{V} h^{-1/2}$,
we transform the nonlinear equation \eqref{root-again} to the equivalent form $f(\mathcal{V},\textsf{h}) = 0$, where
\begin{equation}
\label{root-again-V}
f(\mathcal{V},\textsf{h}) := \frac{1}{4} \mathcal{V}^4 - \Omega + \lambda \textsf{h} \mathcal{V}^{2p}, \quad
\textsf{h} := h^{-p}.
\end{equation}
Let $\mathcal{V}_0 := \sqrt[4]{4 \Omega}$ be the root of $f(\mathcal{V}_0,0) = 0$. Since $f$ is $C^1$ with respect to $\mathcal{V}$
and linear with respect to $\textsf{h}$ with $\partial_{\mathcal{V}} f(\mathcal{V}_0,0) = \mathcal{V}_0^3 > 0$,
the implicit function theorem implies the existence and uniqueness of the root
$\mathcal{V}(\textsf{h})$ of the nonlinear equation $f(\mathcal{V}(\textsf{h}),\textsf{h}) = 0$ for every small $\textsf{h}$ such that
$\mathcal{V}$ is $C^1$ with respect to $\textsf{h}$ and $\mathcal{V}(\textsf{h}) = \mathcal{V}_0 + \mathcal{O}(\textsf{h})$ as $\textsf{h} \to 0$.
By uniqueness of the positive root $\mathcal{U}(h)$, we obtain $\mathcal{U}(h) = \mathcal{V}(\textsf{h}) h^{-1/2}$,
which yields the asymptotic expansion \eqref{root-asymptotic}.
\end{proof}

By Lemma \ref{lem-0}, we set $U = V h^{-1/2}$ and rewrite the root-finding problem \eqref{root} in the equivalent
form $\textsf{F}(V,\textsf{h},\varepsilon) = 0$, where
 \begin{align}\label{root-new}
[\textsf{F}(V,\textsf{h},\varepsilon)]_n := \left( \lambda \textsf{h} |V_n|^{2p} - \Omega + \frac{1}{4} \sum_{k=n}^{\infty} |V_k|^4 \right) V_n
+ \varepsilon \left( V_{n+1}-2 V_{n} + V_{n-1} \right), \quad n \in \mathbb{Z},
\end{align}
with $\textsf{h} := h^{-p}$ and $\varepsilon := h^{-2}$. The following lemma shows that the limiting configuration
$V = \mathcal{V}(\textsf{h}) \delta_0$ with $\delta_0$ being Kronecker's delta function given by \eqref{delta}
persists with respect to small parameter $\varepsilon$ for any small $\textsf{h}$.

\begin{lem}
\label{lem-2}
Fix $\Omega > 0$, $\lambda > 0$, and $p > 0$. For every $\textsf{h} > 0$, there exists $\varepsilon_0 > 0$
such that for every $\varepsilon \in (0,\varepsilon_0)$ the root-finding
problem $\textsf{F}(V,\textsf{h},\varepsilon) = 0$ with $\textsf{F}$ given by \eqref{root-new}
admits the unique solution $V(\textsf{h},\varepsilon) \in \ell^1(\mathbb{Z})$
such that $V(\textsf{h},\varepsilon)$ is $C^1$ with respect to $\varepsilon$ and
\begin{equation*}
V(\textsf{h},\varepsilon) = \mathcal{V}(\textsf{h}) \delta_0 + \mathcal{O}(\varepsilon) \quad \mbox{\rm as} \quad \varepsilon \to 0,
\end{equation*}
where $\mathcal{V}(\textsf{h}) = h^{1/2} \mathcal{U}(h)$ is defined by Lemma $\ref{lem-0}$.
\end{lem}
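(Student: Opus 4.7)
The plan is to apply the implicit function theorem to $\textsf{F}(V, \textsf{h}, \varepsilon) = 0$ at the base point $(V, \varepsilon) = (\mathcal{V}(\textsf{h})\delta_0, 0)$ with $\textsf{h} > 0$ fixed, working in the space of real sequences. A direct verification of the base point proceeds componentwise: for $n = 0$ the bracket in \eqref{root-new} evaluated at $V = \mathcal{V}(\textsf{h})\delta_0$ reduces to $f(\mathcal{V}(\textsf{h}), \textsf{h})$, which vanishes by Lemma \ref{lem-0}; for $n \neq 0$ the common factor $V_n$ vanishes. The $C^1$ regularity of $\textsf{F} : \ell^1(\mathbb{Z}) \times \mathbb{R}^+ \times \mathbb{R} \to \ell^1(\mathbb{Z})$ in all three arguments is an immediate analogue of Lemma \ref{lem-1}, using that the scalar map $V \mapsto V |V|^{2p}$ is $C^1$ on $\mathbb{R}$ for every $p > 0$ with derivative $(2p+1)|V|^{2p}$.

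The central step is to prove that the Jacobian $J := D_V \textsf{F}(\mathcal{V}\delta_0, \textsf{h}, 0)$ is a bounded invertible operator on $\ell^1(\mathbb{Z})$. Direct differentiation at the single-site profile exploits that only $V_0 = \mathcal{V}$ is nonzero: the nonlocal factor $V_n \sum_{k \geq n} |V_k|^4$ produces no off-diagonal contribution, since any such term requires two distinct indices at which $V_\cdot \neq 0$ simultaneously. Consequently $J$ reduces to multiplication by a sequence with only three distinct entries:
\begin{align*}
\mathcal{D}_n &= -\Omega, \qquad n > 0, \\
\mathcal{D}_n &= -\Omega + \tfrac{1}{4}\mathcal{V}^4 \,=\, -\lambda \textsf{h}\,\mathcal{V}^{2p}, \qquad n < 0, \\
\mathcal{D}_0 &= (2p+1)\lambda \textsf{h}\,\mathcal{V}^{2p} - \Omega + \tfrac{5}{4}\mathcal{V}^4 \,=\, 2p\Omega + \tfrac{2-p}{2}\mathcal{V}^4,
\end{align*}
where the second equality on each line eliminates $\lambda \textsf{h}\,\mathcal{V}^{2p}$ via the root identity $\lambda \textsf{h}\,\mathcal{V}^{2p} + \mathcal{V}^4/4 = \Omega$.

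The main obstacle is to verify $\mathcal{D}_0 \neq 0$ for \emph{every} $p > 0$, which is delicate when $p > 2$, since the coefficient of $\mathcal{V}^4$ in the simplified expression turns negative. The key auxiliary bound is $0 < \mathcal{V}^4 < 4\Omega$, a direct consequence of $\lambda \textsf{h}\,\mathcal{V}^{2p} > 0$ in the root relation. For $p > 2$ this yields
\[
\mathcal{D}_0 \;>\; 2p\Omega + \tfrac{2-p}{2}\cdot 4\Omega \;=\; 4\Omega \;>\; 0,
\]
while for $p \in (0, 2]$ positivity is immediate. The other diagonal entries are manifestly negative and constant on each half-line, so $(1/\mathcal{D}_n)_{n \in \mathbb{Z}}$ is a uniformly bounded sequence and $J^{-1}$ is bounded on $\ell^1(\mathbb{Z})$. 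The implicit function theorem then supplies $\varepsilon_0 > 0$ and a unique $C^1$ curve $V(\textsf{h}, \varepsilon) \in \ell^1(\mathbb{Z})$ on $(0, \varepsilon_0)$ with $V(\textsf{h}, 0) = \mathcal{V}(\textsf{h})\delta_0$, and the expansion $V(\textsf{h}, \varepsilon) = \mathcal{V}(\textsf{h})\delta_0 + \mathcal{O}(\varepsilon)$ is an immediate consequence of the $C^1$ dependence on $\varepsilon$.
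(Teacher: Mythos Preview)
Your proof is correct and follows the same implicit function theorem strategy as the paper: verify the base point, cite the $C^1$ regularity, and show the Jacobian at $(\mathcal{V}(\textsf{h})\delta_0,\textsf{h},0)$ is diagonal with nonzero entries. The only cosmetic difference is that the paper simplifies $\mathcal{D}_0$ to $2p\lambda\textsf{h}\,\mathcal{V}^{2p} + \mathcal{V}^4$, which is manifestly positive for all $p>0$ and so avoids your case distinction for $p>2$.
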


\begin{proof}
We check the three conditions of the implicit function theorem.
The mapping $\textsf{F}(V,\textsf{h},\varepsilon) : \ell^1(\mathbb{Z}) \times
\mathbb{R}^+ \times \mathbb{R}^+ \mapsto \ell^1(\mathbb{Z})$
is $C^1$ by Lemma \ref{lem-1}. By Lemma \ref{lem-0}, we have
$\textsf{F}(\mathcal{V}(\textsf{h}) \delta_0,\textsf{h},0) = 0$.
Finally, we compute the Jacobian of $\textsf{F}(V,\textsf{h},\varepsilon)$
at $(\mathcal{V}(\textsf{h}) \delta_0, \textsf{h}, 0)$, which is a diagonal operator
with the diagonal entries:
$$
[D_V \textsf{F}(\mathcal{V}(\textsf{h}) \delta_0,\textsf{h},0)]_{nn} =
\left\{ \begin{array}{ll} \frac{1}{4} \mathcal{V}(\textsf{h})^4 - \Omega, \quad & n \in \mathbb{Z}^-, \\
(2p+1) \lambda \textsf{h} \mathcal{V}(\textsf{h})^{2p} + \frac{5}{4} \mathcal{V}(\textsf{h})^4 - \Omega, \quad & n = 0, \\
- \Omega, \quad & n \in \mathbb{Z}^+, \end{array} \right.
$$
where $\mathbb{Z}^{\pm} := \{ \pm 1, \pm 2, \dots \}$.
With the account of the nonlinear equation $f(\mathcal{V},\textsf{h}) = 0$ with $f$ given by
\eqref{root-again-V}, the Jacobian operator can be rewritten in the form:
\begin{equation}
\label{Jacobian-explicit}
[D_V \textsf{F}(\mathcal{V}(\textsf{h}) \delta_0,\textsf{h},0)]_{nn} =
\left\{ \begin{array}{ll} -\lambda \textsf{h} \mathcal{V}(\textsf{h})^{2p}, \quad & n \in \mathbb{Z}^-, \\
2p \lambda \textsf{h} \mathcal{V}(\textsf{h})^{2p} + \mathcal{V}(\textsf{h})^4, \quad & n = 0, \\
- \Omega, \quad & n \in \mathbb{Z}^+. \end{array} \right.
\end{equation}
For every $\textsf{h} > 0$, the Jacobian operator $D_V \textsf{F}(\mathcal{V}(\textsf{h}) \delta_0,\textsf{h},0)$
is invertible so that the assertion of the lemma follows by the implicit function theorem.
\end{proof}

\vspace{0.2cm}

\begin{pf1}{\it of Theorem \ref{thm-existence}.}
In order to apply the result of Lemma \ref{lem-2} to the root-finding problem $F(U,h) = 0$ with $F$ given by \eqref{root},
we should realize that both parameters $\textsf{h} = h^{-p}$ and $\varepsilon = h^{-2}$ are small as $h \to \infty$.
As a result, the Jacobian operator $D_V \textsf{F}(\mathcal{V}(\textsf{h}) \delta_0,\textsf{h},0)$
given by \eqref{Jacobian-explicit} becomes singular
in the limit $h \to \infty$. To be precise, it follows from the explicit expression \eqref{Jacobian-explicit}
that there exists a positive constant $C$ independent of $h$ such that
$$
\| D_V \textsf{F}(\mathcal{V}(\textsf{h}) \delta_0,\textsf{h},0) \|_{\ell^1(\mathbb{Z}) \to \ell^1(\mathbb{Z})} \geq C h^{-p}.
$$

In order to show that the root of $\textsf{F}(V,h^{-p},h^{-2})$ exists for large $h$ for every $p \in (0,2)$, we rewrite
the system $\textsf{F}(V,h^{-p},h^{-2}) = 0$ componentwise:
\begin{eqnarray}
\label{system-1}
& n \in \mathbb{Z}^-  : & \;
\left( \lambda h^{-p} |V_n|^{2p} - \Omega + \frac{1}{4} \sum_{k=n}^{\infty} |V_k|^4 \right) V_n
+ h^{-2} \left( V_{n+1}-2 V_{n} + V_{n-1} \right) = 0, \\
\label{system-2}
& n = 0 : & \;
\left( \lambda h^{-p} |V_0|^{2p} - \Omega + \frac{1}{4} \sum_{k=0}^{\infty} |V_k|^4 \right) V_0
+ h^{-2} \left( V_{1}-2 V_0 + V_{-1} \right) = 0, \\
\label{system-3}
& n \in \mathbb{Z}^+ : & \;
\left( \lambda h^{-p} |V_n|^{2p} - \Omega + \frac{1}{4} \sum_{k=n}^{\infty} |V_k|^4 \right) V_n
+ h^{-2} \left( V_{n+1}-2 V_{n} + V_{n-1} \right) = 0.
\end{eqnarray}
By the last line in \eqref{Jacobian-explicit}, the Jacobian operator for system \eqref{system-3}
is invertible in $\ell^1(\mathbb{Z}^+)$ and the inverse operator is uniformly bounded as $h \to \infty$
if $\Omega > 1$ is fixed.
By the implicit function theorem, for every $V_0 \in \mathbb{R}$ and every large $h$, there exists the unique solution
$\{ V_n \}_{n \in \mathbb{Z}^+} \in \ell^1(\mathbb{Z}^+)$ to system \eqref{system-3} such that
$\| V \|_{\ell^1(\mathbb{Z}^+)} \leq C h^{-2} |V_0|$ for some positive $h$-independent constant $C$.

Similarly, because $\mathcal{V}(\textsf{h}) = \mathcal{V}_0 + \mathcal{O}(\textsf{h})$ as $h \to \infty$,
the middle line in \eqref{Jacobian-explicit} shows that if the solution
$\{ V_n \}_{n \in \mathbb{Z}^+} \in \ell^1(\mathbb{Z}^+)$ to system \eqref{system-3}
is substituted into \eqref{system-2}, then for every $V_{-1} \in \mathbb{R}$ and every large $h$,
there exists the unique solution $V_0 \in \mathbb{R}$ to system \eqref{system-2} such that
$|V_0 - \mathcal{V}(\textsf{h})| \leq C h^{-2} |V_{-1}|$ for some positive $h$-independent constant $C$.

Finally, we treat the remaining system \eqref{system-1}, for which $V_0$ and $\{ V_n \}_{n \in \mathbb{Z}^+}$ are expressed
from the unique solution to systems \eqref{system-2} and \eqref{system-3}. Thanks to the positivity of $|V_k|^4$
and the first line in \eqref{Jacobian-explicit}, we obtain for $n \in \mathbb{Z}^-$:
$$
\lambda h^{-p} |V_n|^{2p} - \Omega + \frac{1}{4} \sum_{k=n}^{\infty} |V_k|^4 \geq
-\Omega + \frac{1}{4} V_0^4 \geq C h^{-p},
$$
for some positive $h$-independent constant $C$. By the implicit function theorem, for every large $h$,
there exists the unique solution $\{ V_n \}_{n \in \mathbb{Z}^-} \in \ell^1(\mathbb{Z}^-)$ such that
$$
\| V \|_{\ell^1(\mathbb{Z}^-)} \leq C h^{p-2} |\mathcal{V}(\textsf{h})| \leq C h^{p-2}
$$
for some positive $h$-independent constant $C$. Since $p < 2$, we have $h^{p-2} \to 0$ as $h \to \infty$.

Combining all bounds together yields the unique root
$U = V h^{-1/2}$ to $F(U,h) = 0$. Recalling that $\mathcal{U}(h) = \mathcal{V}(\textsf{h}) h^{-1/2}$, we obtain the assertion of
Theorem \ref{thm-existence}.
\end{pf1}

\begin{rem}
The arguments based on Lemma $\ref{lem-2}$
are not sufficient for the proof of persistence of single-site solutions for $p \geq 2$ as $h \to \infty$.
This agrees with Remark \ref{rem-critical} since $p = 2$ is a critical power for the system \eqref{equ23}.
On the other hand, the critical power $p = 1$ in Remark $\ref{rem-critical}$ does not play any role if $\Omega$ is
fixed because the scaling transformation \eqref{scaling-cubic} which scales $h$ to unity requires us to scale
the parameter $\Omega$ in $h$.
\end{rem}

\begin{rem}
Besides the single-site solutions, other multi-site solutions can be considered in the anti-continuum limit
$h \to \infty$. In particular, the double-site solution is given by
\begin{equation}
\label{two-site}
U = \mathcal{W}(h) \delta_{0} + \mathcal{U}(h) \delta_{1},
\end{equation}
where $\mathcal{U}(h)$ is the same root of the nonlinear equation \eqref{root-again}, whereas
$\mathcal{W}(h)$ is a positive root of the following nonlinear equation:
\begin{equation}
\label{two-site-root1}
\lambda \mathcal{W}^{2p} - \Omega + \frac{h^2}{4} \mathcal{W}^4 + \frac{h^2}{4} \mathcal{U}^4 = 0,
\end{equation}
or equivalently,
\begin{equation}
\label{two-site-root2}
\lambda \mathcal{W}^{2p} + \frac{h^2}{4} \mathcal{W}^4 -\lambda \mathcal{U}^{2p} = 0.
\end{equation}
By the same arguments as in Lemma $\ref{lem-0}$, existence of the unique root $\mathcal{W}(h)$ can be proven
and the persistence analysis of Lemma $\ref{lem-2}$ holds verbatim for the double-site solution.
\label{rem-two-site}
\end{rem}

Finally, we give a proof of Theorem \ref{thm-nonexistence}, which relies on analysis of the root-finding problem
$F(U,h) = 0$ with $F$ given by \eqref{root} in the continuum limit $h \to 0$.

\vspace{0.2cm}

\begin{pf1}{\it of Theorem \ref{thm-nonexistence}.}
Let us rewrite the root-finding problem $F(U,h) = 0$ with $F$ given by \eqref{root}
in the following form:
 \begin{align}\label{scalar}
\frac{1}{h^2} \left( U_{n+1}-2 U_{n}+ U_{n-1} \right)-\Omega U_n + \frac{h^2}{4}
\left( \sum_{k=n}^{\infty} |U_k|^4 \right) U_n + \lambda |U_n|^{2p} U_n=0, \quad n \in \mathbb{Z}.
\end{align}
As previously, we assume that $U$ is real. Multiplying \eqref{scalar}
by $(U_{n+1}-U_{n-1})$ and summing in $n \in \mathbb{Z}$ under the same assumption
$U \in \ell^1(\mathbb{Z})$ yields the constraint
 \begin{align}\label{constraint}
\frac{h^2}{4} \sum_{n \in \mathbb{Z}} U_n^5 U_{n+1} + \lambda \sum_{n\in \mathbb{Z}} U_{n+1} U_n (U_n^{2p} - U_{n+1}^{2p}) = 0.
\end{align}
If $U_{-n} = U_n$, $n \in \mathbb{Z}$, then it follows directly that
$\sum_{n\in \mathbb{Z}} U_{n+1} U_n (U_n^{2p} - U_{n+1}^{2p}) = 0$, therefore,
the constraint \eqref{constraint} on existence of $U \in \ell^1(\mathbb{Z})$ reduces to
 \begin{align}\label{constraint-new}
\sum_{n \in \mathbb{Z}} U_n^5 U_{n+1} = 0.
\end{align}
We show that this constraint cannot be satisfied if $U$ satisfies the first bound in \eqref{error-bound} with $Q$
being the continuous NLS soliton in the exact form given by \eqref{soliton}. Indeed, we have
 \begin{align}\label{constraint-expanded}
\sum_{n \in \mathbb{Z}} U_n^5 U_{n+1} = \sum_{n \in \mathbb{Z}} Q(hn)^5 Q(hn + h) + \mathcal{R}(h),
\end{align}
where the residual term $\mathcal{R}(h)$ satisfies the bound
$$
|\mathcal{R}(h)| \leq C \left( \| Q(h \cdot) \|_{\ell^{\infty}}^5 + \| U \|_{\ell^{\infty}}^5 \right)
\| U - Q(h \cdot) \|_{\ell^1} \leq C h,
$$
since the embedding of $\ell^1$ into $\ell^{\infty}$ and the triangle inequality implies
$$
\| U \|_{\ell^{\infty}} \leq \| Q(h\cdot) \|_{\ell^{\infty}} + \| U - Q(h \cdot) \|_{\ell^{\infty}} \leq C,
$$
where the positive constant $C$ is $h$-independent and can change from one line to another line. Because $Q$ is $C^{\infty}$,
we use Riemann sums for smooth functions and rewrite the first term in \eqref{constraint-expanded} in the form
\begin{align}
\nonumber
\sum_{n \in \mathbb{Z}} Q(hn)^5 Q(hn + h) & =
\sum_{n \in \mathbb{Z}} Q(hn)^5 \left[ Q(hn) + h Q'(hn + \xi_n) \right] \\
& = \frac{1}{h} \left[ \int_{-\infty}^{\infty} Q(x)^6 dx + C h \right],
\label{constraint-expanded-new}
\end{align}
where $\xi_n \in [hn,hn+h]$ and $C > 0$ is $h$-independent. Since $\int_{\mathbb{R}} Q^6 dx > 0$ is $h$-independent,
it follows from \eqref{constraint-expanded} and \eqref{constraint-expanded-new} that
$\sum_{n \in \mathbb{Z}} U_n^5 U_{n+1} > 0$ for small $h$ and, hence, the constraint \eqref{constraint-new} cannot be satisfied.
This contradiction proves the assertion of the theorem. Finally, we note from \eqref{tilde-G} with $\gamma = 0$
by using the same estimates like in \eqref{constraint-expanded} and \eqref{constraint-expanded-new} that
$$
|G_n| \leq \frac{h^2}{4} \sum_{k \in \Bbb{Z}} U_k^4 \leq \frac{h^2}{4} \left[ \sum_{k \in \Bbb{Z}} Q(hk)^4 + Ch \right]
\leq \frac{h}{4} \left[ \int_{-\infty}^{\infty} Q(x)^4 dx + Ch \right],
$$
hence, the second bound in \eqref{error-bound} is implied by the first bound in \eqref{fig-0}.
\end{pf1}

\begin{rem}
The argument in the proof of Theorem $\ref{thm-nonexistence}$ does not
eliminate solutions of the system \eqref{equ23} for small $h$ which
are not close to the continuous NLS solitons in the sense of the bound \eqref{error-bound}.
\end{rem}

\section{Numerical results}
\setcounter{equation}{0}
\label{sec-5}

We approximate solutions of the difference equations \eqref{equ23} numerically by using the Newton--Raphson iteration algorithm
for the root-finding problem $F(U,h) = 0$, where $F(U,h)$ is given by \eqref{root}. The starting guess of the
iterative algorithm is either the single-site solution $\mathcal{U}(h) \delta_0$ or
the double-site solution $\mathcal{W}(h) \delta_0 + \mathcal{U}(h) \delta_1$, where $\mathcal{U}(h)$ and $\mathcal{W}(h)$ are found numerically from
the roots of the nonlinear equations \eqref{root-again} and \eqref{two-site-root2}. If iterations of the Newton--Raphson algorithm converge at
one value of $h$, we use the final approximation at this value of $h$ as a starting approximation for another
value of $h$ nearby. This parameter continuation is carried towards both the anti-continuum limit $h \to \infty$ and the continuum limit
$h \to 0$. We fix $\lambda = 1$ and use different values of parameter $\Omega > 0$ and $p > 0$ for such continuations in $h$.

Figure \ref{fig-0} gives examples of two stationary bound states of the system \eqref{equ23} for fixed $\Omega = 1$, $p = 1$, and $h = 3$.
One state is obtained by iterations from the single-site solution $\mathcal{U}(h) \delta_0$ (left panel),
whereas the other state is obtained by iterations from the double-site solution $\mathcal{W}(h) \delta_0 + \mathcal{U}(h) \delta_1$
(right panel).

\begin{figure}[h!]
	\centering
		\includegraphics[width=0.45\linewidth]{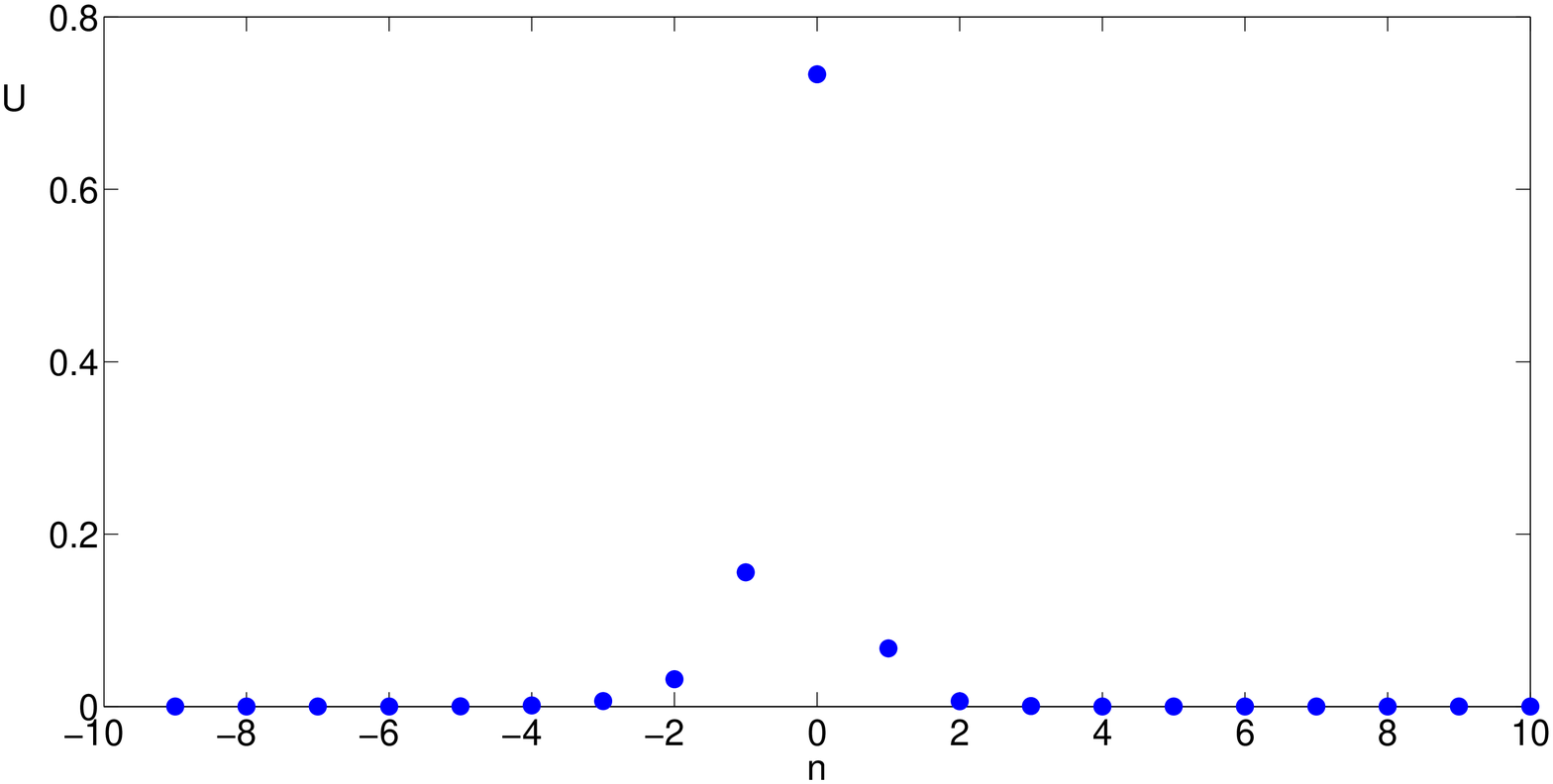}
		\includegraphics[width=0.45\linewidth]{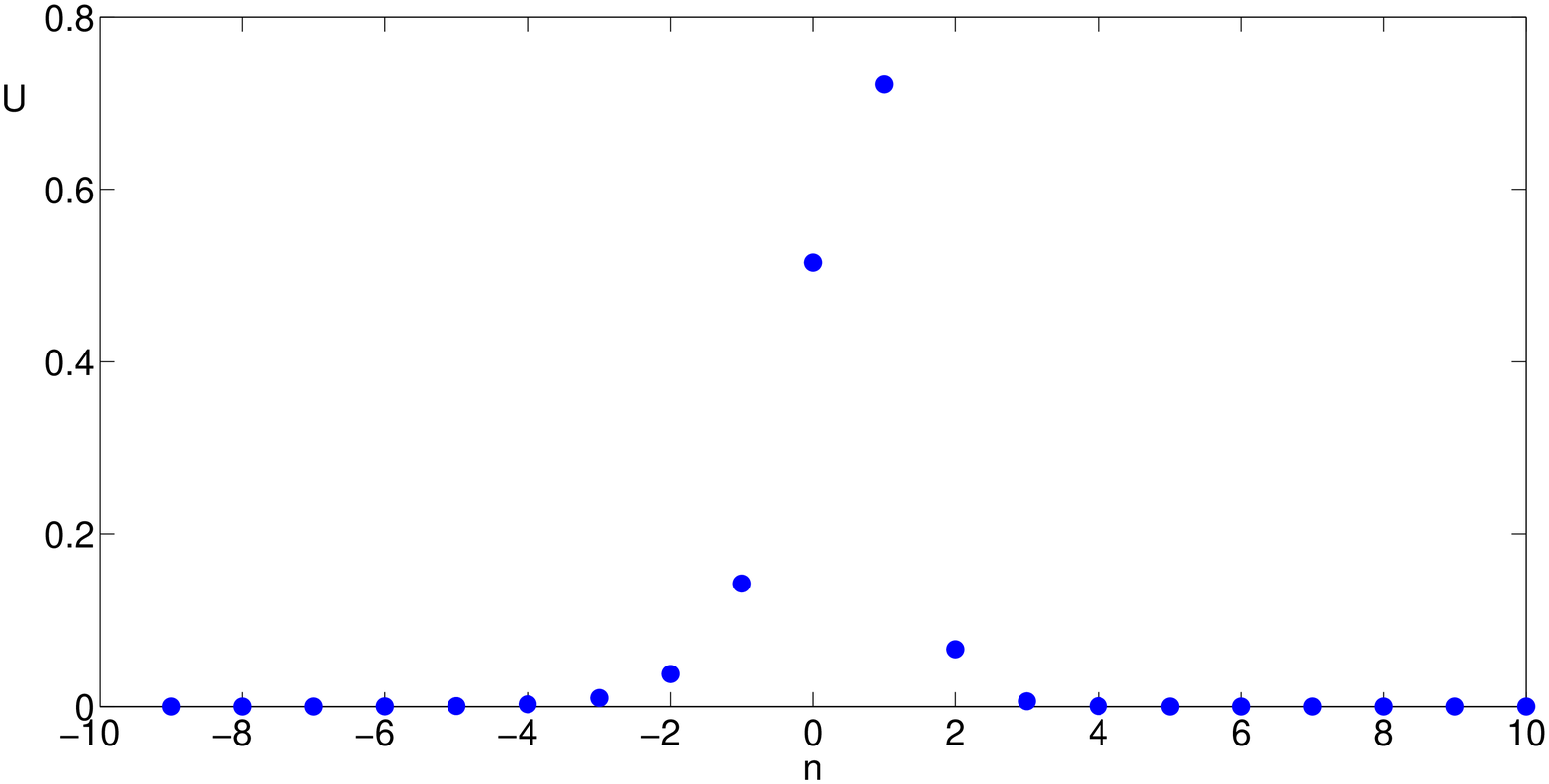}
\caption{Examples of single-site (left) and double-site (right) solutions $U$ for $p = 1$, $\Omega = 1$, and $h = 3$.}
\label{fig-0}
\end{figure}

Figure \ref{fig-1} shows the mass $M$ given by \eqref{mass-discrete} for the same two stationary states of the
system \eqref{equ23} versus $h$ for fixed $\Omega = 1$ with $p = 1$ (left) and $p = 3/2$ (right).
The lower branch corresponds to the single-site solution $\mathcal{U}(h) \delta_0$,
whereas the upper branch corresponds to the double-site solution $\mathcal{W}(h) \delta_0 + \mathcal{U}(h) \delta_1$.
By Theorem \ref{thm-existence} and Remark \ref{rem-two-site}, both branches of stationary states
extend to the anti-continuum limit of $h \to \infty$, as is confirmed in Fig. \ref{fig-1}.
On the other hand, in accordance with Theorem \ref{thm-nonexistence},
both branches do not extend to the continuum limit $h \to 0$ but coalesce in a fold bifurcation
at a critical value of $h$.

\begin{figure}[h!]
	\centering
		\includegraphics[width=0.45\linewidth]{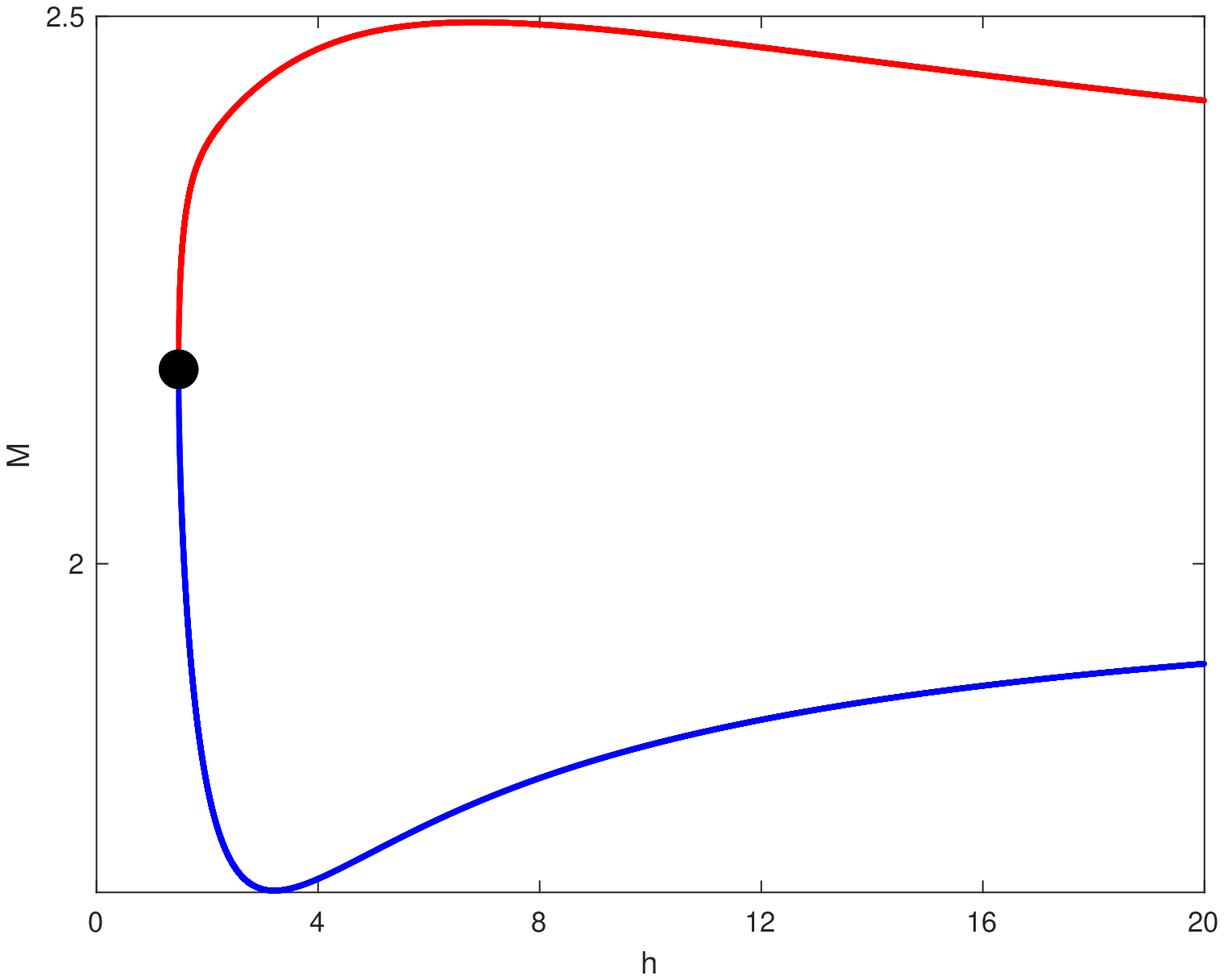}
		\includegraphics[width=0.45\linewidth]{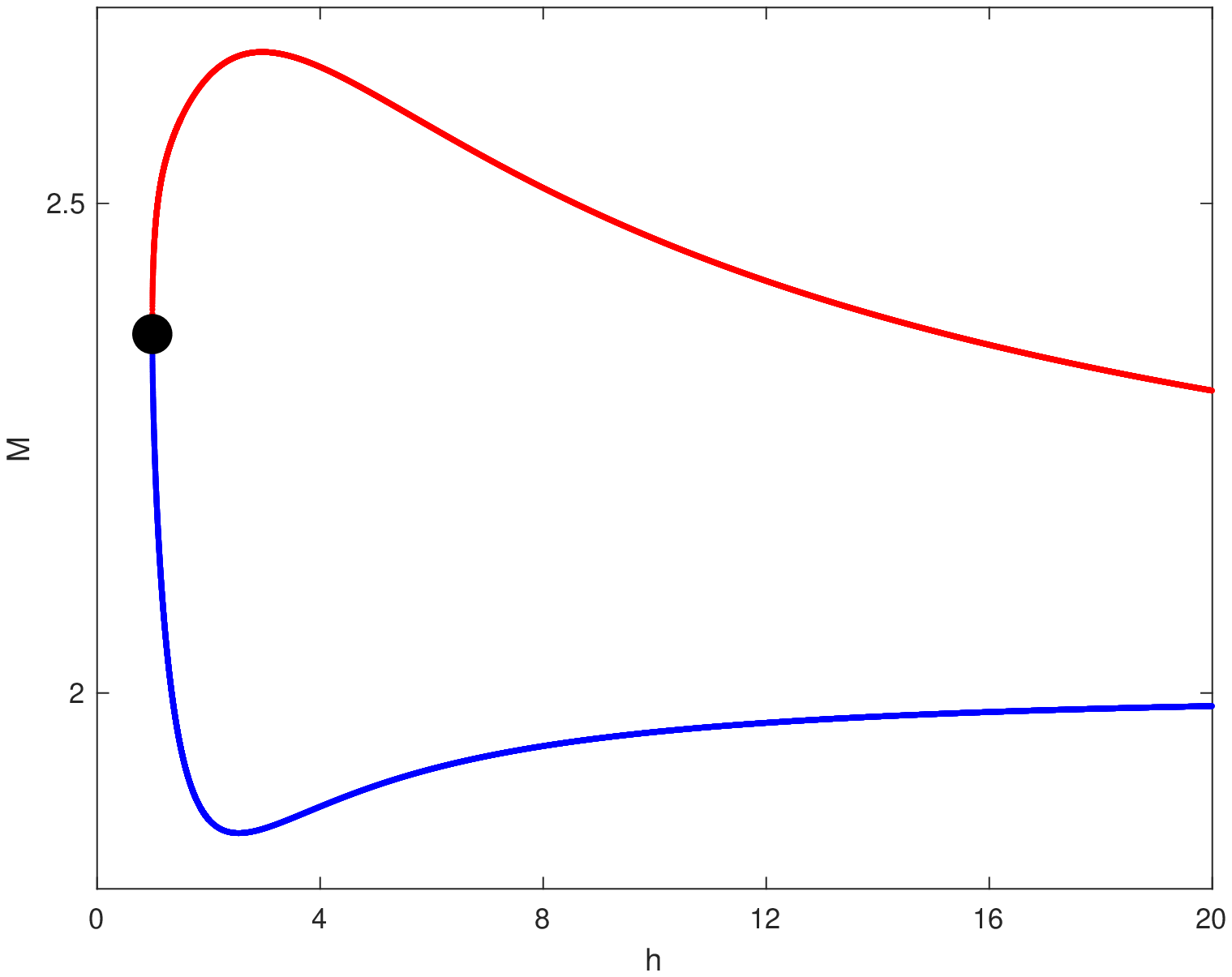}
	\caption{Mass $M$ in \eqref{mass-discrete} for two stationary states of system \eqref{equ23} versus
$h$ for $\lambda = 1$, $\Omega = 1$ and either $p = 1$ (left) or $p = 3/2$ (right). Blue curve shows
the stationary states obtained from the single-site solution $\mathcal{U}(h) \delta_0$, whereas red
curve shows the stationary states obtained from the double-site solution $\mathcal{W}(h) \delta_0 + \mathcal{U}(h) \delta_1$.
The big dot marks the fold bifurcations of the two branches. }
\label{fig-1}
\end{figure}

Figure \ref{fig-2} shows the same as Fig. \ref{fig-1} but for two values of $\Omega$ with $\Omega = 1$ and $\Omega = 10$.
Let $h_*(\Omega)$ denote the critical value of $h$ for the fold bifurcation.
It follows from Fig. \ref{fig-2} that the value of $h_*(\Omega)$ decreases with large values of $\Omega$.
For $p = 1$, this result follows from the scaling transformation \eqref{scaling-cubic}.
Since $\tilde{U}$, $\tilde{G}$, and $\tilde{\Omega}$ solve the same system \eqref{equ23} but with $h = 1$,
the fold bifurcation happens for some fixed value of $\tilde{\Omega}$ denotes by $\tilde{\Omega}_*$.
Then, for fixed $\Omega$, the value $h_*(\Omega)$ is found from the scaling transformation as
$$
h_*(\Omega) = \frac{\sqrt{\tilde{\Omega}_*}}{\sqrt{\Omega}},
$$
so that if $\Omega$ increases, then $h_*(\Omega)$ decreases.

For other values of $p$, a similar explanation can be provided based on
the generalized scaling transformation with parameter $a \in \mathbb{R}$:
\begin{equation}
\label{scaling}
\tilde{U} = h^{a} U, \quad \tilde{G} = h^{4a-2} G, \quad \tilde{\Omega} = h^{2ap} \Omega
\end{equation}
which reduces the system of difference equations \eqref{equ23} to the equivalent form:
 \begin{align}\label{eq-scaled}
 \left\{
 \begin{aligned}
& h^{-2(1-ap)} \left( \tilde{U}_{n+1}-2 \tilde{U}_{n}+ \tilde{U}_{n-1} \right) -
\tilde{\Omega} \tilde{U}_n + h^{2+2a(p-2)} \tilde{G}_n \tilde{U}_n + \lambda |\tilde{U}_n|^{2p} \tilde{U}_n = 0, \\
& \tilde{G}_{n+1} - \tilde{G}_n= -\frac{1}{4} |\tilde{U}_n|^4.
 \end{aligned}
\right.
\end{align}
If $p \neq 2$, the critical scaling between the two nonlinear terms occurs at $a = \frac{1}{2-p}$,
for which the first equation of the system \eqref{eq-scaled} can be rewritten in the form:
\begin{equation}
\label{eq-final-1}
h^{\frac{4(p-1)}{(2-p)}} \left( \tilde{U}_{n+1}-2 \tilde{U}_{n}+ \tilde{U}_{n-1} \right) -
\tilde{\Omega} \tilde{U}_n + \tilde{G}_n \tilde{U}_n + \lambda |\tilde{U}_n|^{2p} \tilde{U}_n = 0.
\end{equation}
If $p \in (1,2)$, then $h^{\frac{4(p-1)}{(2-p)}} \to 0$ as $h \to 0$. Let $\tilde{\Omega}_*(h)$ be
the value of $\tilde{\Omega}$ at the fold bifurcation in \eqref{eq-final-1} that depends on $h$. If we assume that
$\tilde{\Omega}_*(h)$ converges as $h \to 0$ to a nonzero value $\tilde{\Omega}_*(0)$,
then we have
$$
h_*(\Omega)^{\frac{2p}{2-p}} \approx \frac{\tilde{\Omega}_*(0)}{\Omega}
$$
so that if $\Omega$ increases, then $h_*(\Omega)$ decreases.

\begin{figure}[h!]
	\centering
		\includegraphics[width=0.45\linewidth]{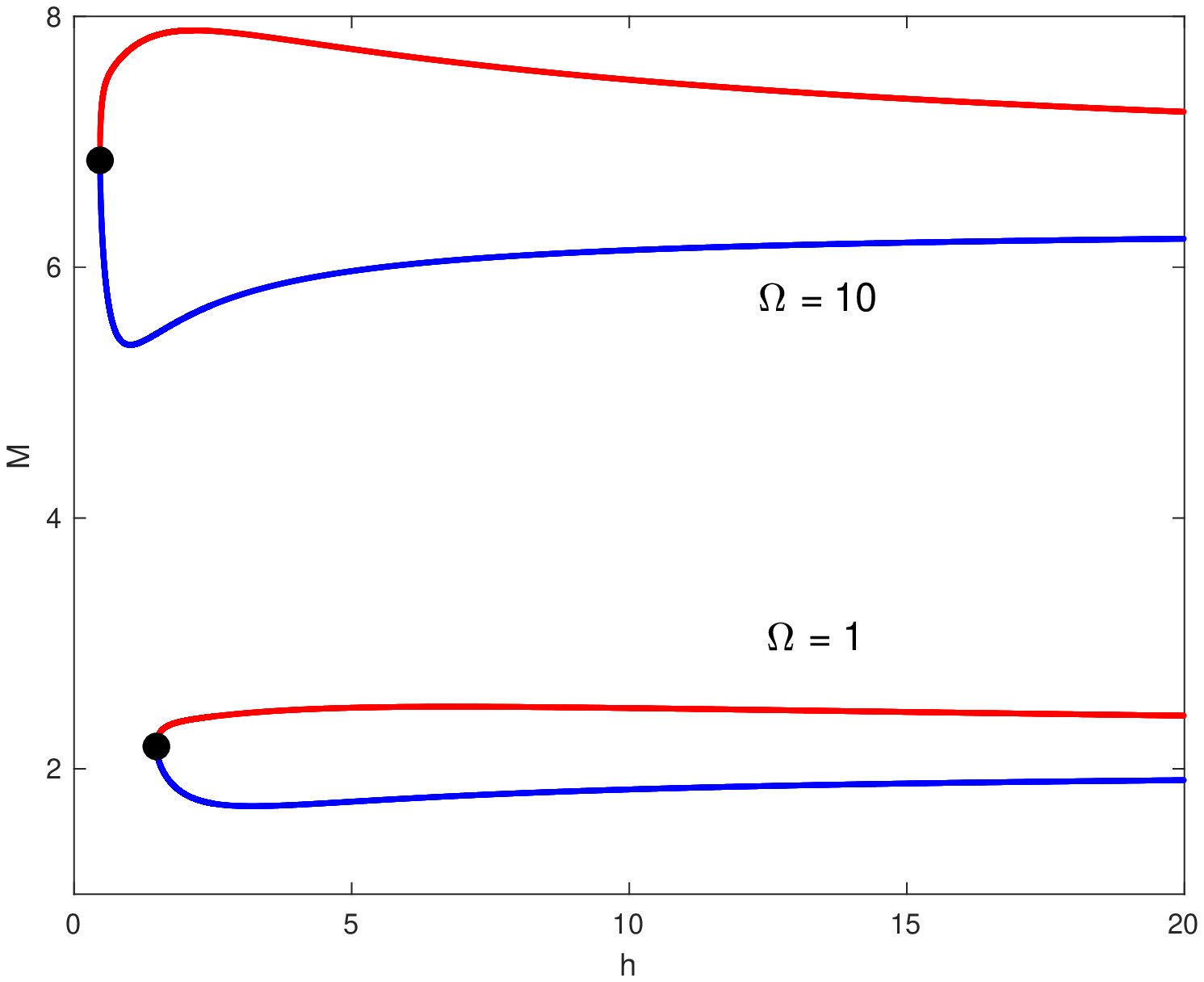}
		\includegraphics[width=0.45\linewidth]{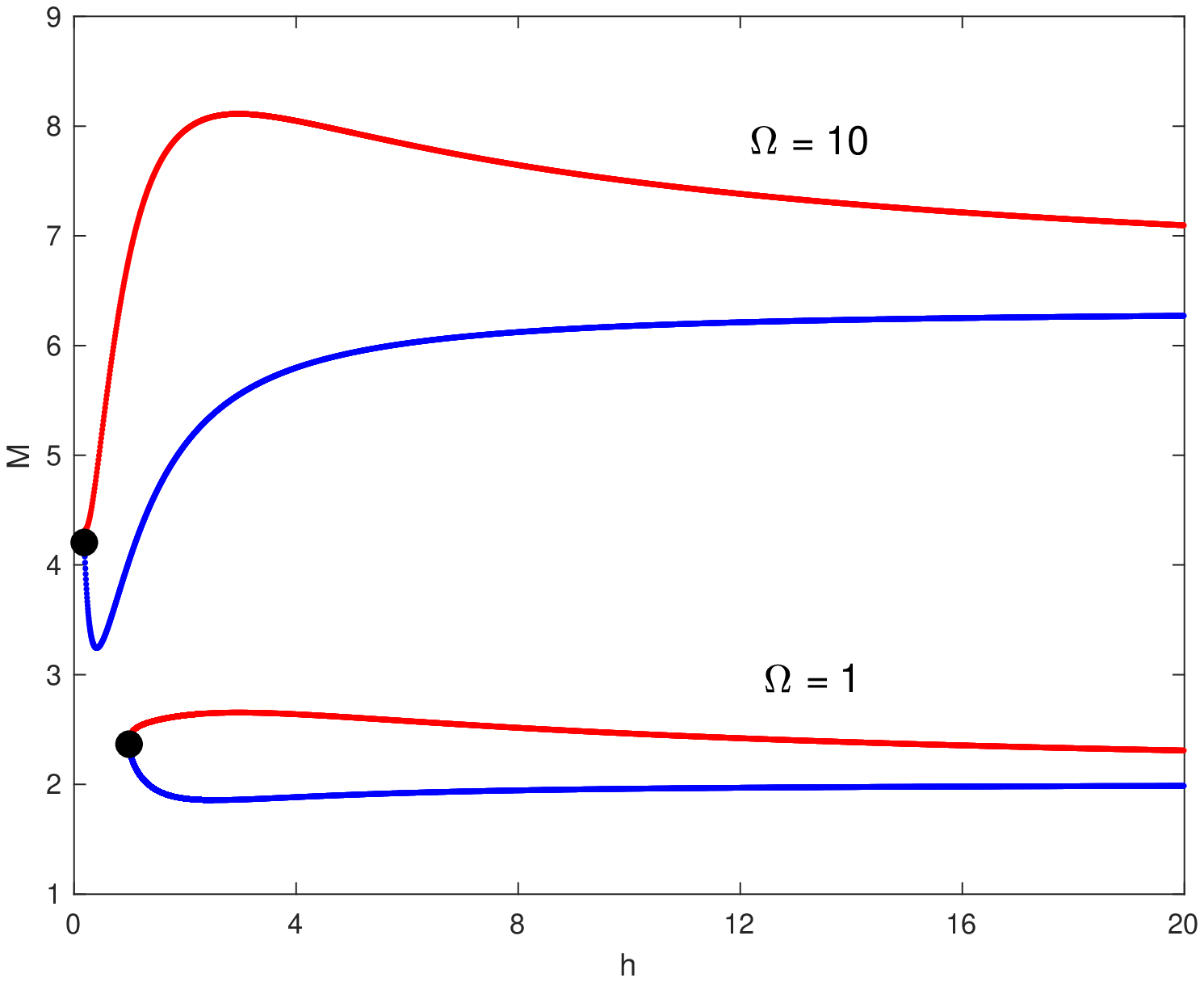}
	\caption{Branches of the single-site and double-site stationary states
for $p = 1$ (left) and $p = 3/2$ (right) with two values of $\Omega = 1$ and $\Omega = 10$.}
\label{fig-2}
\end{figure}

Figure \ref{fig-3} shows the same as Fig. \ref{fig-1}
for fixed $\Omega = 1$ with $p = 2$ (left) and $p = 3$ (right).
Compared to the case $p < 2$, the stationary states have two fold bifurcations both in the continuum limit $h \to 0$
and in the anti-continuum limit $h \to \infty$. This shows that the constraint $p < 2$
for persistence of stationary states in the anti-continuum limit $h \to \infty$
in Theorem \ref{thm-existence} is sharp.

\begin{figure}[h!]
	\centering
		\includegraphics[width=0.45\linewidth]{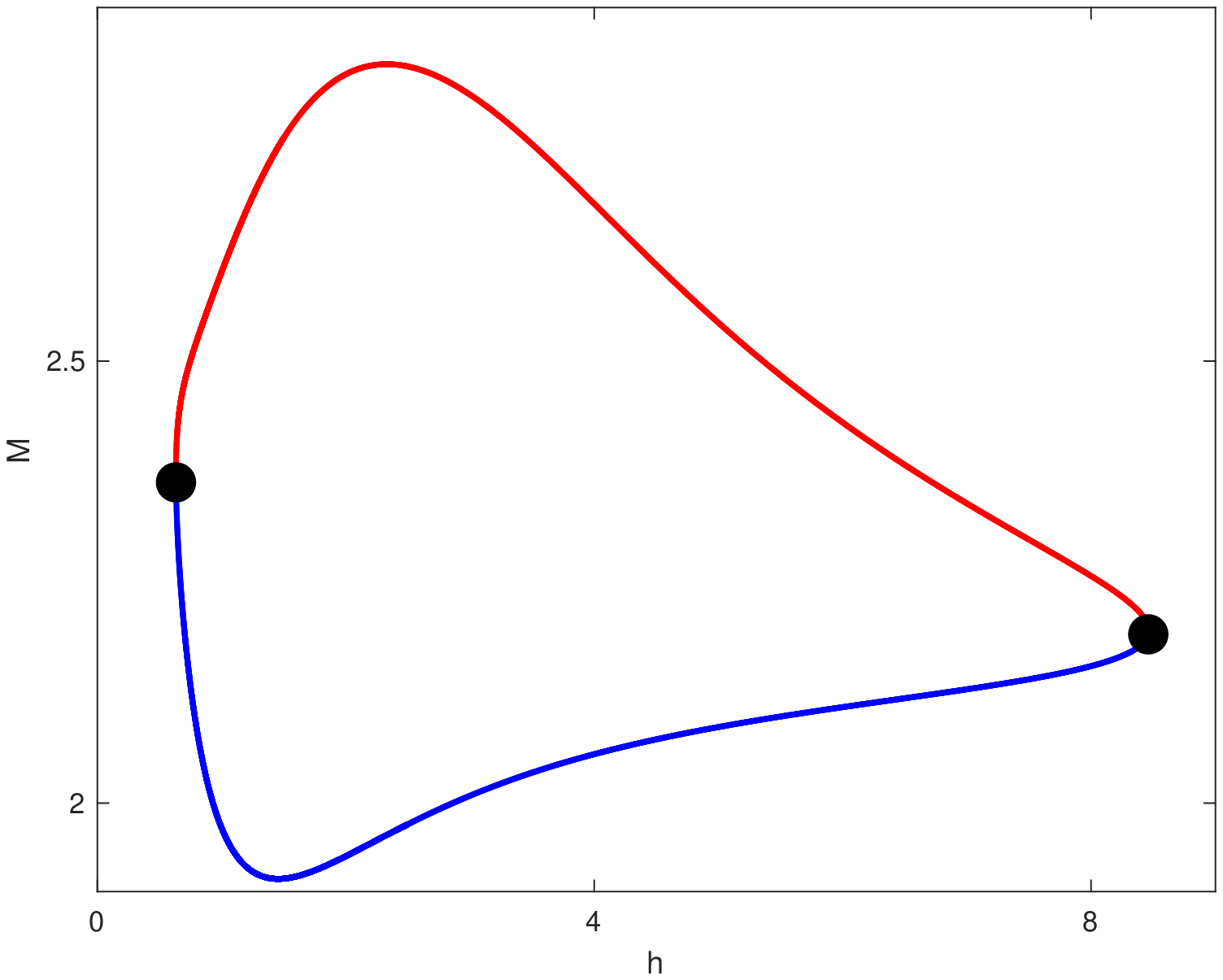}
		\includegraphics[width=0.45\linewidth]{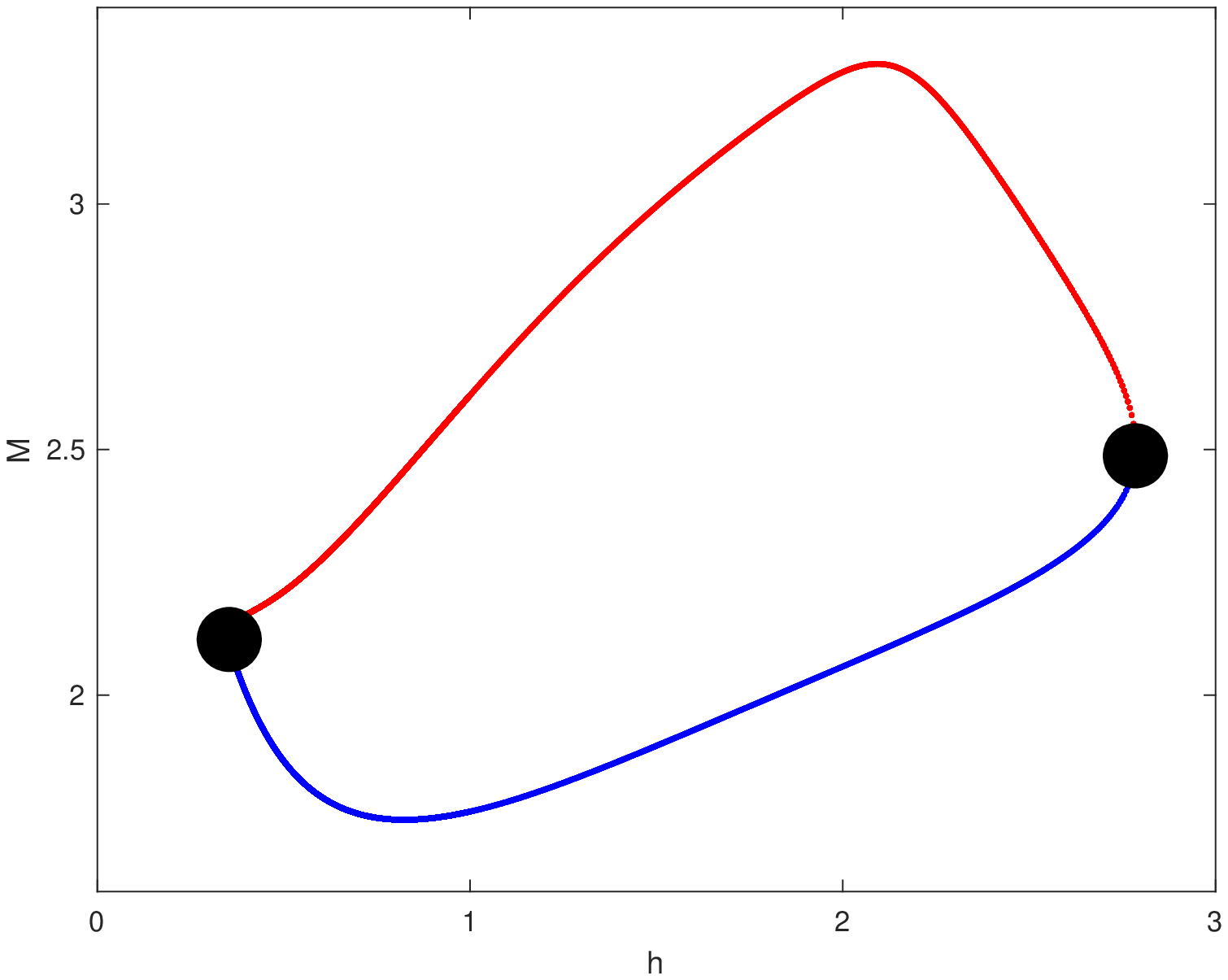}
\caption{The same as on Figure \ref{fig-1} but for $p = 2$ (left) or $p = 3$ (right) with $\Omega = 1$.}
\label{fig-3}
\end{figure}

Figure \ref{fig-4} shows two branches of the stationary states for $p = 2$ and two fixed values of $\Omega$
with $\Omega = 1$ and $\Omega = 2$. The critical value of $h$ for the fold bifurcation at smaller values of $h$ decreases in $\Omega$,
whereas that for the fold bifurcation at larger values of $h$ increases in $\Omega$.
This behavior can again be explained from the generalized scaling transformation \eqref{scaling}.

\begin{figure}[h!]
	\centering
		\includegraphics[width=0.8\linewidth]{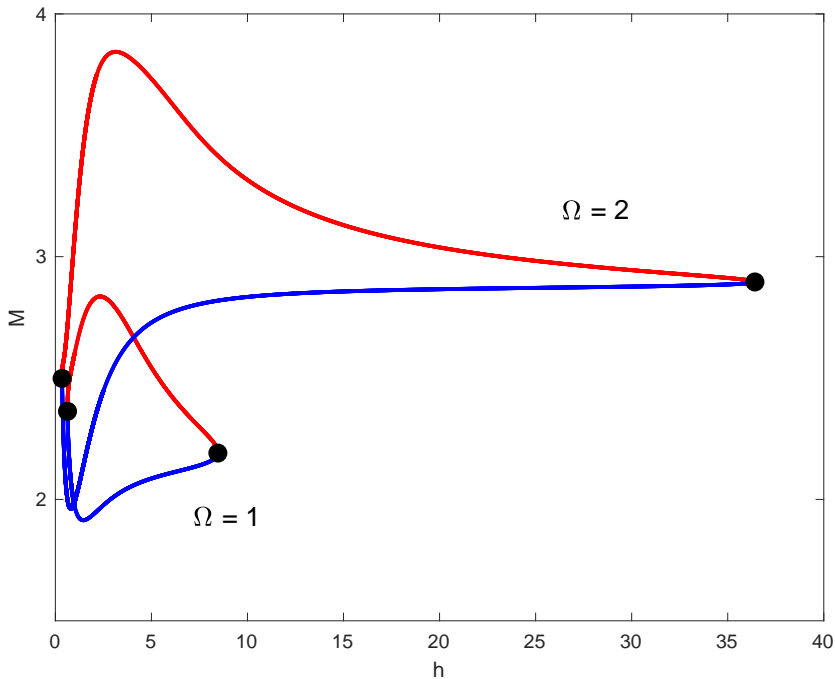}
\caption{Branches of the single-site and double-site stationary states
for $p = 2$ with two values of $\Omega = 1$ and $\Omega = 2$.}
\label{fig-4}
\end{figure}

In the exceptional case $p = 2$, we can fix $a = 1$ and rewrite the first equation of the system \eqref{eq-scaled} in the form:
\begin{equation}
\label{eq-final-2}
h^2 \left( \tilde{U}_{n+1}-2 \tilde{U}_{n}+ \tilde{U}_{n-1} \right) -
\tilde{\Omega} \tilde{U}_n + h^2 \tilde{G}_n \tilde{U}_n + \lambda |\tilde{U}_n|^{2p} \tilde{U}_n = 0,
\end{equation}
with $h^2 \to 0$ as $h \to 0$. If $\tilde{\Omega}_*(h)$
at the fold bifurcation in \eqref{eq-final-2} approaches
asymptotically to $\tilde{\Omega}_*(0) \neq 0$ as $h \to 0$, then $\Omega \sim h^{-4}$ so that
$\Omega \to \infty$ as $h \to 0$ and vice versa. Similarly, we can fix $a = -1$ and
rewrite the first equation of the system \eqref{eq-scaled} in the form:
\begin{equation}
\label{eq-final-3}
h^{-6} \left( \tilde{U}_{n+1}-2 \tilde{U}_{n}+ \tilde{U}_{n-1} \right) -
\tilde{\Omega} \tilde{U}_n + h^2 \tilde{G}_n \tilde{U}_n + \lambda |\tilde{U}_n|^{2p} \tilde{U}_n = 0,
\end{equation}
with $h^{-6} \to 0$ as $h \to \infty$. If $\tilde{\Omega}_*(h)$ at the fold bifurcation in \eqref{eq-final-3}
approaches asymptotically to $\tilde{\Omega}_*(\infty) < \infty$ as $h \to \infty$, then $\Omega \sim h^{4}$ so that
$\Omega \to \infty$ as $h \to \infty$ and vice versa. Thus, both dependencies of the critical value $h_*(\Omega)$ for
the two fold bifurcations seen on Figure \ref{fig-4} can be explained from the generalized scaling transformation
\eqref{scaling}.

Note that we do not give the numerical computations for two branches of stationary bound states for $p = 3$ and $\Omega = 2$.
Although we have found that the single-site stationary states have similar fold bifurcations in the direction of $h \to 0$ and
$h \to \infty$, we were able to connect them with the double-site stationary states at the left bifurcation point only.
At the right bifurcation point, both the single-site and double-site stationary states do not connect to each other but
connect to other stationary states of the system \eqref{eq-scaled}, which we were not able to detect numerically.

\section{Conclusion}
\label{sec-6}

We have proposed a gauge-invariant discrete CSS system \eqref{dcss} on a one-dimensional lattice. This system conserves the
mass \eqref{mass-discrete} but does not conserve the energy. By using the spatial gauge condition $A_1 \equiv 0$,
we have proven local and global well-posedness of the initial-value problem in $\ell^2$ for the scalar field $\phi$
and in $\ell^{\infty}$ for the gauge vector. We have also studied existence of the stationary bound states
from solutions of the coupled difference equations \eqref{equ23} in $\ell^1$ for the scalar field and
in $\ell^{\infty}$ for the gauge vector. For $p \in (0,2)$, we proved that the stationary bound states
persist in the anti-continuum limit $h \to \infty$ but do not persist in the continuum limit $h \to 0$. We have shown
numerically that the branch of single-site solutions terminates at a fold bifurcation with the branch of double-site solutions
as $h$ gets smaller. For $p \geq 2$, two fold bifurcations occur both as $h$ gets smaller and larger, so that the
stationary bound states do not persist both in the continuum limit $h \to 0$ and in the anti-continuum limit $h \to \infty$.

Among further problems, stability of stationary bound states is important for applications and interesting
mathematically. Due to the lack of the energy conservation, the methods of the Hamiltonian dynamical systems
for stability may not be applicable for the discrete CSS system \eqref{dcss} and new analytical methods need
to be developed for the stationary bound states of Theorem \ref{thm-existence}.

\vspace{0.25cm}

{\bf Acknowledgement:} The research of H. Huh was supported by LG Yonam Foundation of Korea and Basic Science Research Program
through the National Research Foundation of Korea funded by the Ministry of Education (2017R1D1A1B03028308).
The research of S. Hussain is supported by the NSERC USRA project. The research of D. Pelinovsky is supported
by the NSERC Discovery grant.

\vspace{0.25cm}

{\bf Conflict of interest.} The authors declare that they have no conflict of interest concerning
publication of this manuscript.

\end{document}